\documentclass{article}
\usepackage{ijcai16}

\usepackage{times}

\usepackage{amsmath, amssymb, amsthm}
\usepackage{pgf}
\usepackage{pgfplots}
\usepackage{pgfkeys}
\usepackage{subfig}
\usepackage{url}

\usetikzlibrary{patterns}

\pdfinfo{
	/Title (Combining the k-CNF and XOR Phase-Transitions)
	/Author (Jeffrey M. Dudek, Kuldeep S. Meel, Moshe Y. Vardi) }

\title{Combining the $k$-CNF and XOR Phase-Transitions	\thanks{The author list has been sorted alphabetically by last name; this should not be used to determine the extent of authors' contributions.}}
\author{Jeffrey M. Dudek\\Rice University \And Kuldeep S. Meel\\Rice University \And  Moshe Y. Vardi\\Rice University
}

\frenchspacing
\begin{document}

\maketitle

\setcounter{topnumber}{2}
\setcounter{bottomnumber}{2}
\setcounter{totalnumber}{2}

\pgfkeys{
	/cnfxor/.is family, /cnfxor,
	default/.style = {k = {k}, n = {n}, r = {r}, s = {s}, cnfNum = {}, xorNum = {}},
	k/.estore in = \clauseK,
	n/.estore in = \clauseN,
	r/.estore in = \clauseR,
	s/.estore in = \clauseS,
	cnfNum/.estore in = \clauseCNFOverride,
	xorNum/.estore in = \clauseXOROverride,
}

\newcommand\F[1][]{
	\pgfkeys{/cnfxor, default, #1}
	F_{\clauseK}(\clauseN, 
		\ifx\clauseCNFOverride\empty\relax
			\clauseR \clauseN
		\else
			\clauseCNFOverride
		\fi
	)
}

\newcommand\Q[1][]{
	\pgfkeys{/cnfxor, default, #1}
	Q(\clauseN, 
		\ifx\clauseXOROverride\empty\relax
			\clauseS \clauseN
		\else
			\clauseXOROverride
		\fi
	)
}

\newcommand\FQ[1][]{
	\pgfkeys{/cnfxor, default, #1}
	\psi_{\clauseK}(\clauseN, 
		\ifx\clauseCNFOverride\empty\relax
			\clauseR \clauseN
		\else
			\clauseCNFOverride
		\fi,
		\ifx\clauseXOROverride\empty\relax
			\clauseS \clauseN
		\else
			\clauseXOROverride
		\fi
	)
}

\renewcommand{\P}[0]{\ensuremath{\mathsf{Pr}}}
\newcommand{\Var}[1]{\ensuremath{\mathsf{Var}\left[#1\right]}}
\newcommand{\Covar}[2]{\ensuremath{\mathsf{Cov}\left[#1, #2\right]}}
\newcommand{\E}[1]{\ensuremath{\mathsf{E}\left[#1\right]}}

\renewcommand{\L}[0]{\Lambda}
\newcommand{\Lfull}[0]{\Lambda_b(k, r)}
\newcommand{\Lfullest}[0]{\Lambda_b(1/2, k, r)}

\newcommand{\SMT}{\ensuremath{\mathsf{SMT}}}
\newcommand{\SAT}{\ensuremath{\mathsf{SAT}}}

\newcommand{\seq}[2]{\{#1_#2\}_{#2 = 1}^{\infty}}
\newcommand{\ceil}[1]{\lceil #1 \rceil}

\newcommand{\CryptoMiniSAT}{\ensuremath{\mathsf{CryptoMiniSAT}}}

\newtheorem{theorem}{Theorem}
\newtheorem{lemma}[theorem]{Lemma}
\newtheorem{corollary}[theorem]{Corollary}

\newtheorem{conjecture}{Conjecture}
\newtheorem{speculation}{Speculation}
\everymath{\textstyle}

\begin{abstract}
The runtime performance of modern \SAT~solvers on random $k$-CNF formulas is deeply connected with the `phase-transition' phenomenon seen empirically in the satisfiability of random $k$-CNF formulas. Recent universal hashing-based approaches to sampling and counting crucially depend on the runtime performance of \SAT~solvers on
formulas expressed as the conjunction of both $k$-CNF and XOR constraints (known as $k$-CNF-XOR formulas), but the behavior of random $k$-CNF-XOR formulas is unexplored in prior work. In this paper, we present the first study of the satisfiability of random $k$-CNF-XOR formulas. We show empirical evidence of a surprising phase-transition that follows a linear trade-off between $k$-CNF and XOR constraints. Furthermore, we prove that a phase-transition for $k$-CNF-XOR formulas exists for $k=2$ and (when the number of $k$-CNF constraints is small) for $k > 2$.
\end{abstract}

\section{Introduction}\label{sec:introduction}

The Constraint-Satisfaction Problem (CSP) is one of the most fundamental problem in 
computer science, with a wide range of applications arising from diverse areas 
such as artificial intelligence, programming languages, biology and the like~\cite{Apt03,Dec03}.
The problem is, in general, NP-complete, and the study of run-time behavior of CSP
techniques is a topic of major interest in AI, cf.~\cite{DM94}. Of specific interest
is the behavior of CSP solvers on random problems~\cite{CKT91}. Specifically, a deep
connection was discovered between the density (ratio of clauses to variables) of random
propositional CNF fixed-width (fixed number of literals per clause) formulas and the runtime 
behavior of SAT solvers on such formulas~\cite{MSL92,CA93,KS94}. The key experimental findings
are: (1) as the density of random CNF instances increases, the probability of satisfiability 
decreases with a precipitous drop, believed to be a phase-transition, around the point 
where the satisfiability probability is 0.5, and (2) instances at the phase-transition point 
are particularly challenging for DPLL-based SAT solvers.  Indeed, phase-transition instances 
serve as a source of challenging benchmark problems in {\SAT} competitions~\cite{BDHJ14}. 
The connection between runtime performance and the satisfiability phase-transition has propelled 
the study of such phase-transition phenomena over the past two decades~\cite{Achlioptas09}, 
including detailed studies of how SAT solvers scale at different densities~\cite{CDSSV03,MH15}.

For random $k$-CNF formulas, where every clause contains exactly $k$ literals, experiments
suggest a specific phase-transition density, for example, density 4.26 for random 3-SAT,
but establishing this analytically has been highly challenging~\cite{CP13a}, and it has been
established only for for $k=2$~\cite{CR92,Goerdt96} and all large enough $k$~\cite{DSS15}. 
A phase-transition phenomenon has also been identified in random XOR formulas (conjunctions of 
XOR constraints). 
Creignou and Daud{\'e}~\shortcite{CD99} proved a phase-transition at density 1 for variable-width 
random XOR formulas.  Creignou and Daud{\'e}~\shortcite{CD03} also proved the existence of a phase 
transition for random $\ell$-XOR formulas (where each XOR-clause contains exactly $\ell$ literals), for $\ell\geq 1$,  
without specifying an exact location for the phase-transition. Dubois and Mandler \shortcite{DM02} independently identified the location of a phase transition for random 3-XOR formulas. More recently, Pittel and 
Sorkin~\shortcite{PS15} identified the location of the phase-transition for $\ell$-XOR formulas for $\ell > 3$.

Despite the abundance of prior work on the phase-transition phenomenon in the satisfiability of 
random $k$-CNF formulas and XOR formulas, no prior work considers the satisfiability of 
random formulas with \emph{both} $k$-clauses and variable-width XOR-clauses together, 
henceforth referred as $k$-CNF-XOR formulas. Recently, successful hashing-based 
approaches to the fundamental problems of constrained sampling and counting employ {\SAT} 
solvers to solve $k$-CNF-XOR formulas~\cite{CMV13a,CMV13b,CMV14,CMV16,MVCFSFIM16}. Unlike previous approaches 
to sampling and counting, hashing-based approaches provide strong theoretical guarantees and 
scale to real-world instances involving formulas with {\em hundreds of thousands} of variables. 
The scalability of these approaches crucially depends on the runtime performance of 
{\SAT} solvers in handling $k$-CNF-XOR formulas ~\cite{IMMV15}. Moreover, since the phase-transition behavior of $k$-CNF constraints have been analyzed to explain runtime behavior of {\SAT} solvers~\cite{AchCoj08}, we believe that analysis of the phase-transition phenomenon for $k$-CNF-XOR formula is the first step towards demystifying the runtime behavior of CNF-XOR solvers such as CryptoMiniSAT~\cite{SNC09} and thus explain the runtime behavior of hashing-based algorithms.

The primary contribution of this work is the first study of phase-transition phenomenon in the satisfiability of random $k$-CNF-XOR formulas, henceforth referred to as the $k$-CNF-XOR phase-transition. In particular:
\begin{enumerate}
	\item We present (in Section~\ref{sec:experiments}) experimental evidence for a $k$-CNF-XOR phase-transition that follows a linear trade-off between $k$-CNF clauses and XOR clauses.
	\item We prove (in Section~\ref{sec:theoretical}) that the $k$-CNF-XOR phase-transition exists when the ratio of $k$-CNF clauses to variables is small. This fully characterizes the phase-transition when $k=2$.
	\item We prove (in Section~\ref{sec:theoretical}) upper and lower bounds on the location of the $k$-CNF-XOR  phase-transition region.
	\item We conjecture (in Section~\ref{sec:conjecture}) that the exact location of a phase-transition for $k \geq 3$ follows the linear trade-off between $k$-CNF and XOR clauses seen experimentally.
\end{enumerate}

\section{Notations and Preliminaries} \label{sec:prelims}
Let $X = \{X_1, \cdots, X_n\}$ be a set of propositional variables and 
let $F$ be a formula defined over $X$. A \emph{satisfying assignment} or 
\emph{witness} of $F$ is an assignment of truth values to the variables in $X
$ such that $F$ evaluates to true. Let $\#F$ denote the number of satisfying 
assignments of $F$. We say that $F$ is \emph{satisfiable} (or \emph{sat.}) if $\#F > 0$ and 
that $F$ is \emph{unsatisfiable} (or \emph{unsat.}) if $\#F = 0$.

We use $\P(X)$ to denote the probability of event $X$. 
We say that an infinite sequence of random events $E_1, E_2, \cdots $ occurs 
\emph{with high probability} (denoted, w.h.p.) if $\lim\limits_{n \to \infty} \P(E_n) = 1$. We use $\E{Y}$ and $\Var{Y}$ to denote respectively the expected value and variance of a random variable $Y$. We use $\Covar{Y}{Z}$ to denote the covariance of random variables $Y$ and $Z$. We use $o_k(1)$ to denote a term which converges to 0 as $k \to \infty$.

A $k$-{\emph clause} is the disjunction of $k$ literals out of $\{ X_1, \cdots, X_n\}$, 
with each variable possibly negated. For fixed positive integers $k$ and $n$ and 
a nonnegative real number $r$, let the random variable $\F$ 
denote the formula consisting of the conjunction of $\ceil{rn}$ $k$-clauses, with each clause chosen 
uniformly and independently from all $\binom{n}{k}2^k$ possible $k$-clauses over $n$ variables.

The early experiments on $\F$~\cite{MSL92,CA93,KS94} led to the following conjecture:
\begin{conjecture}[Satisfiability Phase-Transition Conjecture]
	For every integer $k \geq 2$, there is a critical ratio $r_k$ such that:
	\begin{enumerate}
		\item If $r < r_k$, then $\F$ is satisfiable w.h.p.
		\item If $r > r_k$, then $\F$ is unsatisfiable w.h.p.
	\end{enumerate}
\end{conjecture}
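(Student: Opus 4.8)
The statement is the classical satisfiability-threshold conjecture for random $k$-CNF, and a proof must combine three pieces: an upper bound placing the unsatisfiable regime above some constant, a lower bound placing the satisfiable regime below some constant, and a sharpness argument showing the crossover occurs at a single constant $r_k$ rather than at an $n$-dependent location. The plan for the upper bound is the first-moment method. Writing $Z$ for the number of satisfying assignments of $\F$, a fixed assignment falsifies a uniformly random $k$-clause with probability exactly $2^{-k}$, so over the $\ceil{rn}$ independent clauses $\E{Z} = 2^n (1 - 2^{-k})^{\ceil{rn}}$. Markov's inequality then gives $\P(\#\F > 0) \le \E{Z}$, and the right-hand side tends to $0$ as soon as $r > \ln 2 / (-\ln(1 - 2^{-k}))$. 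Hence $\F$ is unsatisfiable w.h.p.\ above this explicit constant, which establishes the second part of the conjecture for any $r$ past it.

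For the lower bound I would turn to the second-moment method applied to $Z$. The obstacle here is that the plain second moment is too large: $\E{Z^2}$ exceeds $\E{Z}^2$ by an exponential factor, because pairs of assignments that agree on most variables are strongly positively correlated. The standard remedy is to count a weighted version of the satisfying assignments (assigning geometric weights to literals, following Achlioptas and Peres) chosen so that the dominant contribution to $\E{Z^2}$ comes from nearly uncorrelated pairs; one then shows $\E{Z}^2 / \E{Z^2} \ge c > 0$ and concludes $\P(Z > 0) \ge c$ by the Paley--Zygmund inequality for all $r$ below some constant. This alone yields satisfiability only with constant probability, so to upgrade to w.h.p.\ and to pin the transition to a narrow window I would invoke Friedgut's sharp-threshold theorem, which shows that for each $k$ there is a (possibly $n$-dependent) threshold sequence around which the satisfiability window has width $o(1)$.

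The hard part is the last step. Friedgut's theorem permits the threshold location to drift with $n$, whereas the conjecture insists on convergence to a single constant $r_k$. Closing the constant-factor gap between the first-moment and weighted-second-moment bounds, and proving that the sequence does not drift, is precisely where the difficulty lies, and no uniform approach is known. For $k = 2$ the gap collapses because satisfiability is governed by the appearance of a contradictory bicyclic structure in an associated random digraph, pinning $r_2 = 1$; for all sufficiently large $k$ the exact location is determined by the interpolation method together with a one-step replica-symmetry-breaking analysis, due to Ding, Sly, and Sun. Matching these two regimes for every intermediate $k$ is the genuine obstacle my plan would confront, and it has no currently known resolution, so a complete proof for all $k$ would require ideas beyond the moment method alone.
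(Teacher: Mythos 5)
This statement is presented in the paper as a \emph{conjecture}, not a theorem---the paper gives no proof of it, noting only that it has been established for $k=2$ (Chv\'atal--Reed, Goerdt) and for all sufficiently large $k$ (Ding, Sly, and Sun), and remains open for small $k \geq 3$. Your assessment matches the paper's treatment exactly: the first-moment upper bound, the weighted second-moment lower bound, Friedgut's sharp-threshold theorem, and the two resolved regimes are all correctly described, and you rightly conclude that pinning a single constant $r_k$ for every intermediate $k$ is an open problem, which is precisely why the paper states this as a conjecture rather than proving it.
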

The Conjecture was quickly proved for $k=2$, where $r_2 = 1$ \cite{CR92,Goerdt96}. 
In recent work, Ding, Sly, and Sun established the Satisfiability Phase Transition Conjecture 
for all sufficiently large $k$~\cite{DSS15}. The Conjecture has remained elusive for small values 
of $k \geq 3$, although values for these critical ratios $r_k$ can be estimated experimentally, 
e.g., $r_3$ seems to be near $4.26$.

An XOR-clause over $n$ variables is the `exclusive or' of either 0 or 1 together with a subset 
of the variables $X_1$, $\cdots$, $X_n$. An XOR-clause including 0 (respectively, 1) evaluates to true if and only 
if an odd (respectively, even) number of the included variables evaluate to true. 
Note that all $k$-clauses contain \emph{exactly} $k$ variables, whereas the number of variables 
in an XOR-clause is not fixed; a uniformly chosen XOR-clause over $n$ variables contains 
$\frac{n}{2}$ variables in expectation. 

For a fixed positive integer $n$ and a nonnegative 
real number $s$, let the random variable $\Q$ denote 
the formula consisting of the conjunction of $\ceil{sn}$ XOR-clauses, with each clause chosen uniformly 
and independently from all $2^{n+1}$ XOR-clauses over $n$ variables. 
Creignou and Daude~\shortcite{CD99,CD03} proved a phase-transition in the satisfiability of $\Q$: if $s < 1$ then $\Q$ is satisfiable w.h.p., 
while if $s > 1$ then $\Q$ is unsatisfiable w.h.p. 

A $k$-CNF-XOR formula is the conjunction of some number of $k$-clauses and XOR-clauses. 
For fixed positive integers $k$ and $n$ and fixed nonnegative real numbers $r$ and 
$s$, let the random variable $\FQ$ denote the 
formula consisting of the conjunction of $\ceil{rn}$ $k$-clauses and $\ceil{sn}$ XOR-clauses, 
with each clause chosen uniformly and independently from all possible $k$-clauses and 
XOR-clauses over $n$ variables. (The motivation for using
fixed-width clauses and variable-width XOR-clauses comes from the hashing-based approaches to
constrained sampling and counting discussed in Section \ref{sec:introduction}.) Although random $k$-CNF formulas and XOR formulas have been well studied separately, 
no prior work considers the satisfiability of random mixed formulas arising 
from conjunctions of $k$-clauses and XOR-clauses. 

\section{Experimental Results} \label{sec:experiments}
To explore empirically the behavior of the satisfiability of $k$-CNF-XOR 
formulas, we built a prototype implementation in Python that employs the 
{\CryptoMiniSAT}\footnote{\url{http://www.msoos.org/cryptominisat4/}}~\cite{SNC09} 
solver to check satisfiability of $k$-CNF-XOR formulas. We chose {\CryptoMiniSAT} 
due to its ability to handle the combination of $k$-clauses and XOR-clauses efficiently~\cite{CMV14,CFMSV14}. 
The objective of the experimental setup was to empirically determine the behavior of 
$\P(\FQ \text{ is sat})$ with respect to $r$ and $s$, the $k$-clause and XOR-clause 
densities respectively, for fixed $k$ and $n$. 

\subsection{Experimental Setup}
We ran 11 experiments with various values of $k$ and $n$. For $k=2$, we ran experiments for $n \in \{25,50,100,150\}$. For $k=3$, we ran experiments for $n \in \{25, 50, 100\}$. For $k = 4$ and $k=5$, we ran experiments for $n \in \{25, 50\}$. We were not able to run experiments for values of $n$ significantly larger than those listed above: at some $k$-clause and XOR-clause densities, the run-time of {\CryptoMiniSAT} scaled far beyond our computational capabilities.

In each experiment, the XOR-clause 
density $s$ ranged from 0 to 1.2 in increments of 0.02. Since the location of phase-transition 
for $k-$CNF depends on $k$, the range of $k$-clause density $r$ also depends 
on~$k$. For $k=3$, $r$ ranged from from 0 to 6 in increments of 0.04;
for $k=5$, $r$ ranged from 0 to 26 in increments of 0.43, and the like.

To uniformly choose a $k$-clause we uniformly selected without replacement 
$k$~out of the variables $\{X_1, \cdots, X_n\}$. For each selected variable $X_i$, 
we include exactly one of the literals $X_i$ or $\neg X_i$ in the $k$-clause, 
each with probability $1/2$. The disjunction of these $k$ literals is 
a uniformly chosen $k$-clause.  To uniformly choose an XOR-clause, we include 
each variable of $\{X_1, \cdots, X_n\}$ with probability $1/2$ in a set 
$A$ of variables. Additionally we include in $A$ exactly one of $0$ or $1$, 
each with probability $\frac{1}{2}$. The `exclusive-or' of all elements of $A$ 
is a uniformly chosen XOR-clause.
\begin{figure}
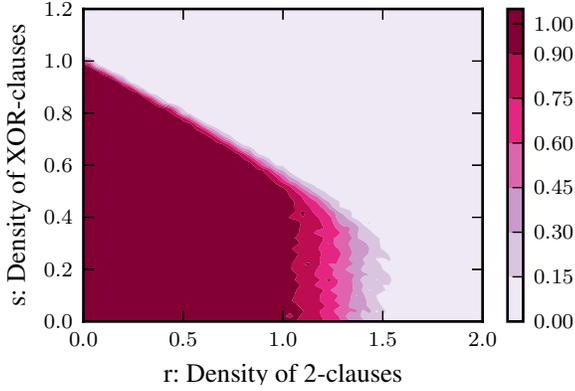

	\centering
\begingroup
\makeatletter

\makeatother
\endgroup

	\caption{Phase transition for $2$-CNF-XOR formulas  \label{fig:2cnfxor}}
\end{figure}
For each assignment of values to $k$, $n$, $r$, and $s$, we evaluated satisfiability, 
using {\CryptoMiniSAT}, of $100$ uniformly generated formulas of $\FQ$ by constructing 
the conjunction of $\ceil{rn}$ $k$-clauses and $\ceil{sn}$ XOR-clauses, with each clause chosen uniformly 
and independently as described above. The percentage of satisfiable formulas gives us an empirical estimate of $\P(\FQ~\text{is satisfiable})$.

Each experiment was run on a node within a high-performance computer cluster. 
These nodes contain 12-processor cores at 2.83 GHz each with 48 GB of RAM per node. 
Each formula was given a timeout of 1000 seconds.

\subsection{Results}

We present scatter plots demonstrating the behavior of satisfiability of $k$-CNF-XOR formulas. 
For lack of space, we present results only for three experiments\footnote{
	The data from all experiments is available at \url{http://www.cs.rice.
	edu/CS/Verification/Projects/CUSP/}
}.
The plots for $k=2,3$ and $5$
are shown in Figure~\ref{fig:2cnfxor},~\ref{fig:3cnfxor}, and ~\ref{fig:5cnfxor}
 respectively. The value of $n$ is set to 150, 100, and 50 respectively for the three experiments above. 

Each figure is a 2D plot, representing the observed probability that $\FQ$ is 
satisfiable as the density of $k$-clauses $r$ and the density of XOR-clauses $s$ varies. 
The x-axis indicates the density of $k$-clauses $r$. The y-axis indicates the density of XOR-clauses $s$. 
The dark (respectively, light) regions represent clause densities where almost all (respectively, no) sampled formulas were satisfiable.

Note that $\FQ$ consists only of XOR clauses when $r=0$. Examining the figures along the line $r=0$ the phase-transition location is 
around ($r=0$, $s=1$), which matches previous theoretical results on the phase-transition for XOR formulas~\cite{CD99}. 
Likewise, $\FQ[xorNum=0] = \F$ and, by examining the figures along the line $s=0$, we observe phase-transition 
locations that match previous studies on the phase-transition 
for $k$-CNF formulas for $k=2,3,~\text{and}~5$ ~\cite{Achlioptas09}. Note that the phase-transition we observe for $2$-CNF formulas is slightly above the true location at $s=1$ \cite{CR92,Goerdt96}; the correct phase-transition point for $2$-CNF formulas is observed only when the number of variables is above 4096 \cite{Wilson2000}.

\begin{figure}
	\centering
\begingroup
\makeatletter

\makeatother
\endgroup

	\caption{Phase transition for $5$-CNF-XOR formulas  \label{fig:5cnfxor}}
\end{figure}
In all the plots, we observe a large triangular region where the probability that $\FQ$ is 
satisfiable is nearly 1. We likewise observe a 
separate region where the observed probability that $\FQ$ is satisfiable is nearly 0. 
More surprisingly, the shared boundary between the two regions for large areas of the 
plots seems to be a constant-slope line. A closer examination of this line at the bottom-right corners of the figures for $k=2$ and $k=3$, where the $k$-clause density is large, reveals that the line appears to ``kink'' and abruptly change slope. We discuss this further in Section~\ref{sec:conjecture}.

\section{Establishing a Phase-Transition} \label{sec:theoretical}
The experimental results presented in Section~\ref{sec:experiments} empirically demonstrate the existence of a  $k$-CNF-XOR phase-transition. Theorem~\ref{thm:mainresult:tight} shows that the $k$-CNF-XOR phase-transition exists when the density of $k$-clauses is small. In particular, the function $\phi_k(r)$ (defined in Lemma \ref{lemma:abbe}) gives the location of a phase-transition between a region of satisfiability and a region of unsatisfiability in random $k$-CNF-XOR formulas.

\renewcommand{\theenumi}{(\alph{enumi})}
\begin{theorem}
	\label{thm:mainresult:tight}
	Let $k \geq 2$.	There is a function $\phi_k(r)$, a constant $\alpha_k \geq 1$, and a countable set of real numbers $\mathcal{C}_k$ (all defined in Lemma \ref{lemma:abbe}) such that for all $r \in [0, \alpha_k) \backslash \mathcal{C}_k$ and $s \geq 0$:
	\begin{enumerate}
		\item \label{thm:mainresult:tight:sat} If $s < \phi_k(r)$, then w.h.p. $\FQ$ is satisfiable.
		\item \label{thm:mainresult:tight:unsat} If $s > \phi_k(r)$, then w.h.p. $\FQ$ is unsatisfiable.
	\end{enumerate}
\end{theorem}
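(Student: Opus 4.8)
The plan is to condition on the $k$-CNF part of the formula and to treat the XOR-clauses as an independent batch of uniformly random affine constraints over $\mathrm{GF}(2)$, thereby reducing the whole statement to the concentration of the \emph{number} of solutions of the pure $k$-CNF formula $\F$ that is supplied by Lemma~\ref{lemma:abbe}. Write $Z = \#\FQ$ for the number of satisfying assignments, let $W \subseteq \{0,1\}^n$ be the set of assignments satisfying the $\ceil{rn}$ $k$-clauses, and set $w = |W| = \#\F$ and $m = \ceil{sn}$. Since the variable-width construction makes each XOR-clause an independent, uniformly random constraint $\langle a, x\rangle = b$ with $(a,b)$ uniform on $\{0,1\}^{n+1}$, a fixed assignment satisfies one XOR-clause with probability $\tfrac12$, whereas two \emph{distinct} assignments jointly satisfy it with probability exactly $\tfrac14$: the event $\langle a, x\oplus y\rangle = 0$ has probability $\tfrac12$, and conditionally on it $b$ must match a determined value, a further factor $\tfrac12$.

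First I would run a second-moment computation with $W$ held fixed. Multiplying over the $m$ independent XOR-clauses gives $\E{Z \mid W} = w\,2^{-m}$ and $\E{Z^2 \mid W} = w\,2^{-m} + (w^2 - w)\,4^{-m}$, so that $\Var{Z\mid W} = w\,2^{-m} - w\,4^{-m} \le \E{Z\mid W}$. The decisive point is that distinct assignments decorrelate to the exact product probability $4^{-m} = \E{Z\mid W}^2/w^2$, which bounds the conditional variance by the conditional mean rather than by its square. Markov's inequality then gives $\P(Z \ge 1 \mid W) \le \E{Z\mid W} = w\,2^{-m}$, and the second-moment (Chebyshev) inequality gives $\P(Z = 0 \mid W) \le \Var{Z\mid W}/\E{Z\mid W}^2 \le 1/\E{Z\mid W} = 2^{m}/w$. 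Thus, conditionally on the $k$-CNF part, satisfiability of $\FQ$ is governed entirely by whether $w\,2^{-m}$ is large or small.

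Next I would invoke Lemma~\ref{lemma:abbe}, which for $r \in [0,\alpha_k)\backslash\mathcal{C}_k$ asserts that $\tfrac1n\log_2 \#\F$ converges in probability to $\phi_k(r)$. Fix such an $r$ and any $s \ge 0$. If $s < \phi_k(r)$, set $\varepsilon = (\phi_k(r)-s)/2 > 0$; with high probability $w > 2^{(s+\varepsilon)n}$, whence $\E{Z \mid W} = w\,2^{-m} > 2^{\varepsilon n - 1} \to \infty$ and $\P(Z = 0\mid W) \le 2^m/w \to 0$ on this event. As the conditioning event on $W$ itself has probability tending to $1$, a union bound gives $\P(Z = 0) \to 0$, i.e.\ $\FQ$ is satisfiable w.h.p. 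Symmetrically, if $s > \phi_k(r)$, set $\varepsilon = (s-\phi_k(r))/2$; with high probability $w < 2^{(s-\varepsilon)n}$, so $\P(Z \ge 1 \mid W) \le w\,2^{-m} < 2^{-\varepsilon n} \to 0$, and $\FQ$ is unsatisfiable w.h.p. The ceilings in $\ceil{rn}$ and $\ceil{sn}$ only perturb exponents by $O(1)$ and are absorbed harmlessly.

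The genuinely hard content sits in Lemma~\ref{lemma:abbe}, and the restriction $r \in [0,\alpha_k)\backslash\mathcal{C}_k$ is exactly the regime in which the quenched solution-count rate of random $k$-CNF is known to concentrate; this is why the characterization is complete for $k=2$, where $\alpha_2$ covers all relevant densities, but only partial for $k \ge 3$, where concentration is established only at small clause density. The self-contained part above is robust precisely because uniform XOR-clauses are pairwise independent in the strong sense that distinct assignments decorrelate to $4^{-m}$; this is what forces $\Var{Z\mid W} \le \E{Z\mid W}$ and hence \emph{upgrades} the second-moment method from a constant-probability lower bound to a w.h.p.\ statement, which is the feature that makes the reduction work at all. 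The main obstacle requiring care is the bookkeeping that interleaves the conditional w.h.p.\ bounds with the high-probability event on $W$ coming from Lemma~\ref{lemma:abbe}; once $\tfrac1n\log_2 w$ is pinned to $\phi_k(r)$, the line $s = \phi_k(r)$ emerges as the threshold, and the near-linearity of $\phi_k$ for small $r$ accounts for the linear trade-off observed experimentally.
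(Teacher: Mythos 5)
Your proposal is correct and follows essentially the same route as the paper: decompose $\FQ$ into an independently generated $k$-CNF part and XOR part, use pairwise independence of uniform XOR-clauses to get a Markov bound (for unsatisfiability) and a Chebyshev/second-moment bound (for satisfiability) conditional on $\#\F$, and finish with the concentration of $\frac{1}{n}\log_2(\#\F)$ around $\phi_k(r)$ from Lemma~\ref{lemma:abbe}. The only cosmetic difference is that the paper packages the limit bookkeeping into reusable wrapper lemmas (Lemmas~\ref{lemma:cnfxor_sat} and~\ref{lemma:cnfxor_unsat}, which are shared with the proof of Theorem~\ref{thm:mainresult:loose}), whereas you inline the same $\epsilon$-argument directly.
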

\begin{proof}
	Part \ref{thm:mainresult:tight:sat} follows directly from Lemma \ref{lemma:mainresult:tight:sat}. Part \ref{thm:mainresult:tight:unsat} follows directly from Lemma \ref{lemma:mainresult:tight:unsat}. The proofs of these lemmas are presented in Sections \ref{sec:theoretical:lower} and \ref{sec:theoretical:upper} respectively.
\end{proof}

$\phi_k(r)$ is the \emph{free-entropy density} of $k$-CNF, drawing on concepts from spin-glass theory \cite{Gogioso14}. From the expression for $\phi_k(r)$ in Lemma \ref{lemma:abbe}, it is easily verified that $\phi_k(0) = 1$ and that $\phi_k(r)$ is a monotonically decreasing function of $r$. Thus when the $k$-clause density ($r$) is 0, Theorem \ref{thm:mainresult:tight} says that an XOR-clause density of $1$ is a phase-transition for XOR-formulas, matching previously known results \cite{CD99}. As the $k$-clause density increases, $\phi(r)$ is decreasing and so the XOR-clause density required to reach the phase-transition decreases.

Theorem \ref{thm:mainresult:tight} fully characterizes the random satisfiability of $\FQ$ when $r < 1$. In the case $k=2$, prior results on random $2$-CNF satisfiability characterize the rest of the region. If $r > 1$, then $\F[k=2]$ is unsatisfiable w.h.p. \cite{CR92,Goerdt96} and so the $2$-clauses within $\FQ[k=2]$ are unsatisfiable w.h.p. without considering the XOR-clauses. Therefore $\FQ[k=2]$ is unsatisfiable w.h.p. if $r > 1$. This, together with Theorem \ref{thm:mainresult:tight}, proves that $\phi_2(r)$ is the complete location of the 2-CNF-XOR phase-transition.

Moreover, Lemma \ref{lemma:alpha} shows that $\alpha_k \geq (1 - o_k(1)) \cdot 2^k \ln(k) / k$ (where $o_k(1)$ denotes a term that converges to 0 as $k \to \infty$) and so Theorem \ref{thm:mainresult:tight} shows that a phase-transition exists until near $r = 2^k \ln(k) / k$ for sufficiently large $k$.

For small $k \geq 3$, the region $r < 1$ characterized by Theorem~\ref{thm:mainresult:tight} is only a small portion of the region where the subset of $k$-clauses remains satisfiable. Moreover, the location of the phase-transition $\phi_k(r)$ given by Theorem \ref{thm:mainresult:tight} is difficult to compute directly. Theorem~\ref{thm:mainresult:loose} gives explicit lower and upper bounds on the location of a phase-transition region.

\renewcommand{\theenumi}{(\alph{enumi})}
\begin{theorem} \label{thm:mainresult:loose}
	Let $k \geq 3$. There is a function $\Lfull$ (defined in Lemma \ref{lemma:achlioptas_numsol})
	such that for all $s \geq 0$ and $r \geq 0$:	
	\begin{enumerate}
		\item \label{thm:mainresult:loose:sat} If $s < \frac{1}{2} \log_2(\Lfull)$ and $r < 2^k \ln(2) - \frac{1}{2}((k+1)\ln(2) + 3)$, then w.h.p. $\FQ$ is satisfiable.
		\item \label{thm:mainresult:loose:unsat} If $s > r \log_2(1 - 2^{-k}) + 1$, then w.h.p. $\FQ$ is unsatisfiable.
	\end{enumerate}
\end{theorem}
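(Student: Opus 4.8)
The plan is to prove the two parts by complementary moment methods: part \ref{thm:mainresult:loose:unsat} by a first-moment (union) bound on the number of satisfying assignments, and part \ref{thm:mainresult:loose:sat} by a second-moment argument that grafts the random XOR constraints onto the second-moment analysis of the $k$-CNF solution count supplied by Lemma~\ref{lemma:achlioptas_numsol}. The guiding intuition is that each XOR-clause independently halves the solution set, so a formula stays satisfiable as long as $sn$ is below the base-$2$ logarithm of the number of solutions of the $k$-CNF part. Since $\Lfull$ is (up to the weighting used in Lemma~\ref{lemma:achlioptas_numsol}) the exponential base of the \emph{second} moment of the weighted $k$-CNF solution count, its square root is the base of the first moment, and hence $\tfrac12\log_2(\Lfull)$ is exactly the per-variable entropy $\tfrac1n\log_2\E{\#\F}$ of the $k$-CNF solution set. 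This is why the satisfiability threshold in part \ref{thm:mainresult:loose:sat} sits at $\tfrac12\log_2(\Lfull)$.

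For part \ref{thm:mainresult:loose:unsat} I would first record the single-assignment probabilities. A fixed $\sigma$ falsifies a uniformly random $k$-clause with probability exactly $2^{-k}$ (one falsifying sign pattern per choice of $k$ variables), and satisfies a uniformly random XOR-clause with probability exactly $\tfrac12$ (for any included variable set, exactly one of the two parity bits is satisfying). Since the $\ceil{rn}$ $k$-clauses and $\ceil{sn}$ XOR-clauses are chosen independently, $\E{\#\FQ} = 2^{n}(1-2^{-k})^{\ceil{rn}}\,2^{-\ceil{sn}}$, so that $\tfrac1n\log_2\E{\#\FQ} \to 1 + r\log_2(1-2^{-k}) - s$. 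Under the hypothesis $s > r\log_2(1-2^{-k})+1$ this limit is negative, hence $\E{\#\FQ}\to 0$ and Markov's inequality gives $\P(\#\FQ\ge 1)\to 0$, i.e. unsatisfiability w.h.p.

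For part \ref{thm:mainresult:loose:sat} I would run a weighted second moment on $Z_w := \sum_{\sigma \vDash \text{XOR}} w(\sigma)$, where $w(\sigma)$ is the Achlioptas--Peres weight depending only on how $\sigma$ satisfies the $k$-clauses (as in Lemma~\ref{lemma:achlioptas_numsol}). Because the weight is a function of the $k$-CNF only while the XOR-satisfaction indicators are a function of the XOR system only, the second moment factorizes: writing $m=\ceil{sn}$ and using that a distinct pair $\sigma\neq\tau$ jointly satisfies one random XOR with probability $\tfrac14=(\tfrac12)^2$ and a single $\sigma$ with probability $\tfrac12$, one gets
\[
\E{Z_w^2} \;=\; 2^{-m}\!\sum_{\sigma} \E{w(\sigma)^2}\;+\;2^{-2m}\!\!\sum_{\sigma\neq\tau}\E{w(\sigma)w(\tau)}.
\]
Dividing by $\E{Z_w}^2 = 2^{-2m}\E{Y_w}^2$, where $Y_w$ is the corresponding weighted $k$-CNF count, the off-diagonal term reduces to $\E{Y_w^2}/\E{Y_w}^2$, which Lemma~\ref{lemma:achlioptas_numsol} makes $1+o(1)$ for $r < 2^k\ln 2 - \tfrac12((k+1)\ln 2 + 3)$; the diagonal term is bounded by $2^{m}\,(\max_\sigma w(\sigma))/\E{Y_w}$, i.e. essentially $(2^{s}/\sqrt{\Lfull}\,)^{n}$, which tends to $0$ precisely when $s<\tfrac12\log_2(\Lfull)$. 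Thus $\E{Z_w^2}\le(1+o(1))\E{Z_w}^2$, and Paley--Zygmund yields $\P(Z_w>0)\to 1$, giving satisfiability w.h.p.

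The main obstacle is obtaining satisfiability \emph{with high probability} rather than merely with positive probability: a plain second moment on the solution count has ratio bounded away from $1$ and would only give $\P(\text{sat})\ge c>0$. Overcoming this is exactly why the weighted count of Lemma~\ref{lemma:achlioptas_numsol} is needed, so that the $k$-CNF off-diagonal ratio is $1+o(1)$; the delicate accounting is then to verify that this $1+o(1)$ survives the combination with the XORs and that the diagonal contribution, once inflated by the factor $2^{m}=2^{sn}$ coming from the XOR second moment, still vanishes --- it is this balance that simultaneously pins the satisfiability threshold at $\tfrac12\log_2(\Lfull)$ and forces the admissible range of $r$ to be the one in which Lemma~\ref{lemma:achlioptas_numsol} applies. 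A secondary point to check carefully is that the independence of the XOR system from the $k$-CNF (and from the weights) genuinely licenses the factorization above, together with the exact pair probabilities $\tfrac12$ and $\tfrac14$ that make the dominant off-diagonal term match $\E{Z_w}^2$.
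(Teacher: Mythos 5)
Your proof of part~\ref{thm:mainresult:loose:unsat} is correct and takes a more direct route than the paper: you apply Markov's inequality once, to $\#(\FQ)$ over the joint randomness of both clause types, using $\E{\#\FQ} = 2^n(1-2^{-k})^{\ceil{rn}}\,2^{-\ceil{sn}}$. The paper factors the same first moment into two steps: a w.h.p.\ upper bound on $\#\F$ by Markov on the CNF part alone (Lemma~\ref{lemma:numsol_bound}), then Markov over the XOR randomness conditioned on that bound (Lemma~\ref{lemma:gomes_xor:satbound}), glued by Lemma~\ref{lemma:cnfxor_unsat}. Yours is shorter; the paper's factored version exists because the conditional XOR step is reused verbatim for the tight bound of Theorem~\ref{thm:mainresult:tight}.

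Part~\ref{thm:mainresult:loose:sat}, however, has a genuine gap. Your argument hinges on the claim that the off-diagonal ratio $\E{Y_w^2}/\E{Y_w}^2$ is $1+o(1)$, and you attribute this to Lemma~\ref{lemma:achlioptas_numsol}. That lemma states no such thing: it is a w.h.p.\ lower bound on the \emph{unweighted} count $\#\F$, imported as a black box from the ACR11 paper, and it says nothing about moments of any weighted count. Worse, the claim is false: in the Achlioptas--Peres/ACR11 weighted second moment, the optimal weights make $\E{Y_w^2} \leq C_k\,\E{Y_w}^2$ for a constant $C_k$ strictly greater than $1$ (the Laplace-method constant arising at overlap $n/2$), not $1+o(1)$. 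That constant is precisely why ACR11 itself needs an additional concentration/boosting step to turn ``with probability at least $1/C_k$'' into ``w.h.p.''---content you would need but cannot extract from the statement of Lemma~\ref{lemma:achlioptas_numsol}. With the true constant ratio, your combined second moment plus Paley--Zygmund yields only $\P(\FQ~\text{is sat.}) \geq c > 0$, not convergence to $1$; and upgrading constant probability to w.h.p.\ in the mixed CNF-XOR model would require a sharp-threshold theorem for that model, which is not available off the shelf.

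The repair is the paper's route, and it explains why the conditioning matters: take Lemma~\ref{lemma:achlioptas_numsol} as given, condition on the w.h.p.\ event $\#\F \geq \frac{1}{2}(\Lfull-\epsilon)^{n/2} \geq 2^{\ceil{sn}+\delta n + 1}$, and run the second moment \emph{only over the XOR randomness}. There your own key observation applies exactly: pairwise independence of XOR solutions (Lemma~\ref{lemma:gomes_xor:pairwise}) kills every covariance, so $\Var{Y} \leq \E{Y}$ and Chebyshev gives conditional failure probability at most $2^{-\delta n - 1} \to 0$ (Lemma~\ref{lemma:gomes_xor:unsatbound}); Lemmas~\ref{lemma:cnfxor_sat} and~\ref{lemma:mainresult:loose:sat} finish the bookkeeping. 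In short, the ``$1+o(1)$ ratio'' you want is real, but only conditionally on $\F$; folded into a joint moment computation, the $k$-CNF correlations re-enter and the ratio is again a constant bounded away from $1$.
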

\begin{proof}
	Part \ref{thm:mainresult:loose:sat} follows directly from Lemma \ref{lemma:mainresult:loose:sat}. Part \ref{thm:mainresult:loose:unsat} follows directly from Lemma \ref{lemma:mainresult:loose:unsat}. The proofs of these lemmas are presented in Sections \ref{sec:theoretical:lower} and \ref{sec:theoretical:upper} respectively.
\end{proof}

Both the upper bound $r \log_2(1 - 2^{-k}) + 1$ and (using the expression for $\Lfull$ in Lemma \ref{lemma:achlioptas_numsol}) the lower bound $\frac{1}{2} \log_2(\Lfull)$ are linear in $r$. 
When the $k$-clause density $r$ is 0, Theorem \ref{thm:mainresult:loose} agrees with Theorem \ref{thm:mainresult:tight}. As the $k$-clause density increases past $\Theta(2^k)$, Theorem \ref{thm:mainresult:loose} no longer gives a lower bound on the location of a possible phase-transition.

\begin{figure}[t]
	\centering
	\begin{tikzpicture}
	\label{key}
	\begin{axis} [
	axis lines = left,
	xmin=0, xmax=23, 
	xtick={0},
	extra x ticks={18.60},
	extra x tick style={xticklabel={$\Theta(2^k)$}},
	ymin=0, ymax=1.1,
	ytick={0},
	extra y ticks={1},
	extra y tick style={yticklabel={1}},
	xticklabels={,,},
	yticklabels={,,},
	xlabel={$r$: Density of $k$-clauses},
	ylabel={$s$: Density of XOR-clauses}
	]
	\addlegendimage{area legend, fill={rgb:red,8;blue,3}}
	\addlegendentry{Satisfiable w.h.p.}
	\addlegendimage{area legend, fill=white}
	\addlegendentry{Unsatisfiable w.h.p.}
	
	\addplot[patch,patch type=triangle, color=white]
	coordinates {(0, 0)(0, 1)(21.83,0)};
	\addplot[patch,patch type=triangle, pattern=north east lines]
	coordinates {(0, 0)(0, 1)(21.83,0)};
	
	\addplot[patch,patch type=rectangle, color={rgb:red,8;blue,3}]
	coordinates {(0,0)(0,1)(18.6,0.06694)(18.6,0)};
	
	\addplot[
	color=black,
	domain=0:21.83
	]
	{1 - 0.0458 * x};
	\addplot[color=black,] coordinates {(18.60, 0)(18.6,0.06694)};
	\addplot[color=black,] coordinates {(0, 1)(18.6,0.06694)};
	
	\node[anchor=center] at (axis cs:6, 0.4) {\textbf{Theorem \ref{thm:mainresult:loose}.\ref{thm:mainresult:loose:sat}}};
	\node[anchor=center] at (axis cs:17, 0.7) {\textbf{Theorem \ref{thm:mainresult:loose}.\ref{thm:mainresult:loose:unsat}}};
	\end{axis}
	\end{tikzpicture}
	\caption{Satisfiability of $\FQ$ as $n \rightarrow \infty$ \label{fig:thmresults}}
\end{figure}
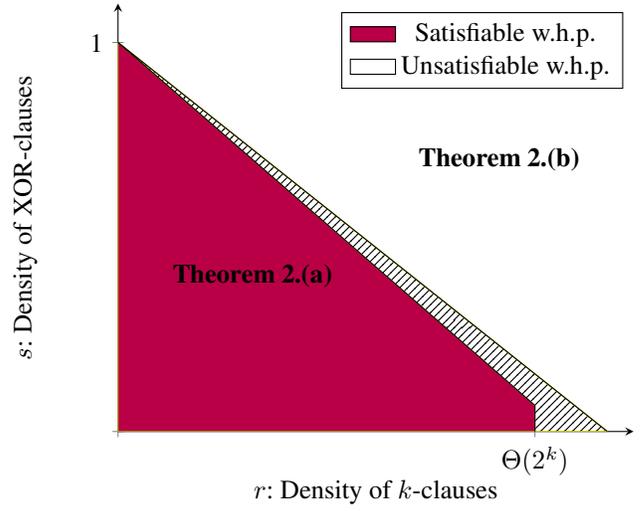

\subsection{A Proof of the Lower Bound}
\label{sec:theoretical:lower}
We now establish Theorem \ref{thm:mainresult:tight}.\ref{thm:mainresult:tight:sat} and Theorem \ref{thm:mainresult:loose}.\ref{thm:mainresult:loose:sat}, which follow directly from Lemma~\ref{lemma:mainresult:tight:sat} and Lemma~\ref{lemma:mainresult:loose:sat} respectively. 

The key idea in the proof of these lemmas is to decompose $\FQ$ into independently generated $k$-CNF and XOR formulas, so that $\FQ = \F \land \Q$. We can then bound the number of solutions to $\F$ from below with high probability and bound from below the probability that $\F$ becomes unsatisfiable after including XOR-clauses on top of $\F$. 

The following three lemmas achieve the first of the two tasks. The first, Lemma \ref{lemma:abbe}, gives a tight bound on $\# \F$ for small $k$-clause densities.
\begin{lemma}
	\label{lemma:abbe}
	Let $k \geq 2$ and let $\alpha_k$ be the supremum of $\{r : \exists \delta > 0~\text{s.t.}~\P(\F~\text{is unsat.}) \leq O(1/(\log n)^{1+\delta})\}$. Then $\alpha_k \geq 1$. Furthermore, there exists a countable set of real numbers $\mathcal{C}_k$ such that for all $r \in [0, \alpha_k) \backslash \mathcal{C}_k$:
	\begin{enumerate}
		\item \label{lemma:abbe:converge} The sequence $\frac{1}{n} \E{\log_2(\#\F)~|~ \F~\text{is sat.}}$ converges to a limit as $n \to \infty$. Let $\phi_k(r)$ be this limit.
		\item \label{lemma:abbe:lower} For all $\epsilon > 0$, w.h.p. $(2^{\phi_k(r)-\epsilon})^n \leq \# \F$.
		\item \label{lemma:abbe:upper} For all $\epsilon > 0$, w.h.p. $(2^{\phi_k(r)+\epsilon})^n \geq \# \F$.
	\end{enumerate}
\end{lemma}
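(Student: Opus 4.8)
The plan is to recognize parts \ref{lemma:abbe:converge}--\ref{lemma:abbe:upper} as the single statement that the normalized log-count $\frac1n\log_2\#\F$ converges in probability to a deterministic \emph{free-entropy density} $\phi_k(r)$, together with $L^1$-convergence of its conditional mean. This is exactly the conclusion of the concentration theorem of Abbe and Montanari for the number of solutions of random CSPs, drawing on the spin-glass tools cited in \cite{Gogioso14}, and the bulk of the argument is to instantiate that machinery for the ensemble $\F$. Two ingredients are needed: the decay estimate $\P(\F\text{ is unsat.})\le O(1/(\log n)^{1+\delta})$ that defines $\alpha_k$ and justifies the conditioning on satisfiability, and the interpolation-plus-concentration scheme that produces the limit. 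I would organize the proof around these two ingredients.

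First I would establish $\alpha_k\ge1$, i.e. that every $r<1$ lies in the defining set; since $\F$ is satisfiable w.h.p. in this range, what is at stake is the \emph{rate}. For $k\ge3$, where $r<1$ is comfortably subcritical, I would run a first-moment bound over minimal unsatisfiable sub-formulas: by the Aharoni--Linial bound every minimal unsatisfiable CNF has strictly more clauses than variables, so an unsatisfiable $\F$ must carry, on some set of $s$ variables, at least $s+1$ clauses supported there; since a $k$-clause needs $k$ distinct variables we have $s\ge k$, and summing $\binom{n}{s}\binom{\lceil rn\rceil}{s+1}(\binom{s}{k}/\binom{n}{k})^{s+1}$ over $s$ shows this probability is $O(n^{-c})$ for $r<1$ (the $s=k$ term already contributes $O(n^{k+1-k^2})$). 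For $k=2$, where the minimal obstructions are long bicycles and the window near $r=1$ is delicate, I would instead invoke the scaling-window analysis of the $2$-SAT transition together with $r_2=1$ \cite{CR92,Goerdt96}, which gives polynomial decay of $\P(\text{unsat})$ for each fixed $r<1$. In both cases the decay is $O(n^{-c})$, which dominates the required $O(1/(\log n)^{1+\delta})$, so $\alpha_k\ge1$.

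For the limit itself I would pass through a positive-temperature partition function $Z_{n,\beta}=\sum_{\sigma}e^{-\beta v(\sigma)}$, where $v(\sigma)$ counts the clauses of $\F$ violated by $\sigma$, so that $Z_{n,\beta}$ decreases to $\#\F$ as $\beta\to\infty$ and $\#\F\le Z_{n,\beta}\le\#\F+2^ne^{-\beta}$. The interpolation method (Guerra--Toninelli, as adapted to diluted models by Bayati--Gamarnik--Tetali and by Panchenko--Talagrand) shows that $\E[\log Z_{n,\beta}]$ is superadditive in $n$, hence $\frac1n\E[\log Z_{n,\beta}]\to\psi_k(\beta,r)$ for every fixed $\beta$. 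At fixed $\beta$ the map $\F\mapsto\log Z_{n,\beta}$ changes by at most $\beta$ when a single clause is resampled, since the weights never vanish, so McDiarmid's inequality gives $\frac1n\log Z_{n,\beta}\to\psi_k(\beta,r)$ in probability. As $\psi_k(\beta,r)$ is monotone in $\beta$, I set $\phi_k(r):=\lim_{\beta\to\infty}\psi_k(\beta,r)$; the squeeze above together with $\P(\text{unsat})\to0$ identifies this with the exact-count free energy and yields parts \ref{lemma:abbe:lower} and \ref{lemma:abbe:upper}, while the conditional-mean convergence \ref{lemma:abbe:converge} follows once the $O(1/(\log n)^{1+\delta})$ conditioning is absorbed into the mean.

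The hard part is the interchange of the $n\to\infty$ and $\beta\to\infty$ limits, and this is precisely where both the countable set $\mathcal{C}_k$ and the exact polylog rate enter. The interpolation bounds pin the double limit down only up to the possibility that $\lim_n$ and $\lim_\beta$ fail to commute at isolated densities; following Abbe--Montanari I would exploit that the candidate limit $\phi_k(r)$ is a nonincreasing function of $r$, hence continuous off a countable set $\mathcal{C}_k$, and argue that at each point of continuity the two limits agree so that the stated convergence holds. Promoting convergence-in-probability to the conditional and almost-sure statements then requires summability of $\P(\F\text{ is unsat.})$ along a subsequence $n_j$ chosen sparse enough that $\sum_j(\log n_j)^{-(1+\delta)}<\infty$, after which monotonicity of $n\mapsto\log\#\F$ under clause addition fills the gaps between subsequence points. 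Making this subsequence-plus-interpolation argument work, and verifying that the only obstructions to the limit interchange form a countable set, is the main technical obstacle; the first-moment and interpolation estimates above are comparatively routine by contrast.
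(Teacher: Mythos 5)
The paper does not actually prove this lemma: it is imported wholesale from Abbe and Montanari \cite{AM14} ($\alpha_k \geq 1$ is their Remark~2, part~(a) is their Theorem~3, and parts~(b), (c) are their Theorem~1). Your proposal instead attempts to reconstruct that proof, and while the toolbox you name (positive-temperature partition function, interpolation/superadditivity, concentration, monotonicity in $r$ yielding a countable exceptional set) is indeed the right one, it contains a concrete error. The first-moment bound you propose for $\alpha_k \geq 1$ when $k \geq 3$ fails. Writing $s = cn$ and $m = \lceil rn \rceil$, the general term of your sum satisfies
\[
\frac{1}{n}\ln\left[\binom{n}{s}\binom{m}{s+1}\left(\binom{s}{k}\Big/\binom{n}{k}\right)^{s+1}\right] \;\longrightarrow\; H(c) + rH(c/r) + kc\ln c,
\qquad H(x) = -x\ln x - (1-x)\ln(1-x),
\]
and as $c \to r^-$ this exponent tends to $(k-1)\,r\ln r - (1-r)\ln(1-r)$, which is strictly positive for $r$ close to $1$ (e.g.\ $k=3$, $r=0.9$ gives $\approx 0.04$; for every $k$ it is positive once $1-r < e^{-(k-1)}$). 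So the union bound over ``$s$ variables carrying $s+1$ clauses'' is exponentially \emph{large}, not $O(n^{-c})$: mere density of a subformula, ignoring the signs of the literals, is not a rare event near $r=1$, and the Aharoni--Linial reduction therefore cannot give polynomial decay of $\P(\text{unsat})$ on all of $[0,1)$. Since the entire content of $\alpha_k \geq 1$ is the decay rate up to $r = 1$, this is a genuine gap, not a fixable constant. (Your $k=2$ argument via the scaling-window results is fine.)

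Separately, the heart of the lemma --- existence of the limit $\phi_k(r)$ off a countable set and the two w.h.p.\ bounds --- is exactly the part your sketch defers. Interpolation gives convergence of $\frac{1}{n}\E{\log Z_{n,\beta}}$ at each fixed $\beta$, but the interchange of $\beta \to \infty$ with $n \to \infty$, and the absorption of the conditioning on satisfiability (this is precisely where the $O(1/(\log n)^{1+\delta})$ rate in the definition of $\alpha_k$ is needed, since $\log \#F$ is $-\infty$ on the unsat event and naive Lipschitz/martingale concentration breaks), constitute the actual theorems of \cite{AM14}; asserting that ``at each point of continuity the two limits agree'' is a statement of the goal, not an argument. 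As written, the proposal is a plan to re-derive \cite{AM14} with its hardest steps left open and with an incorrect first-moment lemma supporting $\alpha_k \geq 1$.
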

\begin{proof}
	These proofs are given in \cite{AM14}. $\alpha_k \geq 1$ is given as \textbf{Remark 2}. Part (a) is given as \textbf{Theorem 3}. Parts (b) and (c) are given as \textbf{Theorem 1}.
\end{proof}

We abuse notation to let $\phi_k(r)$ denote the limit of the sequence in Lemma \ref{lemma:abbe}.\ref{lemma:abbe:converge} for all $r > 0$, although a~priori this sequence may not converge for $r \geq \alpha_k$. Later work refined the value of $\alpha_k$ in Lemma \ref{lemma:abbe} for sufficiently large $k$ and so extended the tight bound on $\# \F$. In particular, Lemma \ref{lemma:alpha} implies that $\alpha_k \geq (1 - o_k(1)) \cdot 2^k \ln(k) / k$.
\begin{lemma}
	\label{lemma:alpha}
	Let $k \geq 2$. For all $r \geq 0$, if $r \leq (1 - o_k(1)) \cdot 2^k \ln(k) / k$ then $\P(\F~\text{is sat.}) \geq 1 - O(1/n)$.
\end{lemma}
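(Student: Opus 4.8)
The plan is to split the statement into a qualitative part (that $\F$ has a satisfying assignment with probability bounded away from $0$) and a quantitative part (upgrading this to $1 - O(1/n)$), and to observe that for the stated density the qualitative part is in fact very comfortable. Writing $X$ for the ordinary number of satisfying assignments, $\E{X} = 2^n (1 - 2^{-k})^{\lceil rn \rceil}$ grows exponentially for any $r$ well below $2^k \ln 2$, but the \emph{plain} second moment is useless here: $\E{X^2}/\E{X}^2$ is dominated by pairs of assignments of large overlap rather than by near-orthogonal pairs, so the ratio is exponentially large for every $r > 0$. I would therefore use the weighted second moment of Achlioptas--Peres, replacing $X$ by $X_\gamma = \sum_\sigma \prod_c w_\gamma(\sigma, c)$, where a satisfied clause $c$ contributes a geometric factor $\gamma^{t(\sigma,c)}$ in the number $t(\sigma,c)$ of its literals made true by $\sigma$ (and $0$ if $c$ is violated), tuning $\gamma = \gamma(r,k)$ to meet a balance condition.

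Expanding $\E{X_\gamma^2}$ over pairs $(\sigma,\tau)$ and grouping by their overlap $z \in [0,1]$ reduces the estimate to a single-variable functional $\Lambda(z)$ that multiplies the binomial entropy $\exp(n H(z))$, with $H(z) = -z\ln z - (1-z)\ln(1-z)$, against the per-clause correlation $g_\gamma(z)^{\lceil rn\rceil}$, normalized so that $\Lambda(1/2) = 1$. The method succeeds exactly when $\Lambda$ attains its global maximum at the independent overlap $z = 1/2$; the role of $\gamma$ is to force $\Lambda'(1/2) = 0$ and $\Lambda''(1/2) < 0$, after which one rules out competing maxima in the interior and near $z = 1$ (where $H$ vanishes and the correlation factor peaks). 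The Achlioptas--Peres analysis makes this work for all $r \le 2^k \ln 2 - O(k)$, a density $\Theta(2^k)$ that dwarfs the target $(1 - o_k(1)) 2^k \ln k / k = o(2^k)$; hence by Paley--Zygmund, $\P(\F~\text{is sat.}) = \Omega(1)$ with enormous room to spare at the stated density.

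Because existence is so slack here, the genuine content of the lemma is the \emph{rate} $1 - O(1/n)$, and this is where the particular value $2^k \ln k / k$ enters. I would first boost the constant lower bound to $1 - o(1)$ using the sharpness of the satisfiability threshold: satisfiability is monotone in the clause set, so Friedgut's sharp-threshold theorem makes the transition window $o(1)$ wide, and since the target density lies a constant factor below the region where $\P(\F~\text{is sat.}) = \Omega(1)$, the probability is pushed to $1 - o(1)$. To reach $1 - O(1/n)$ I would either bound the unsatisfiability probability by a first-moment estimate on small minimal-unsatisfiable subformulas, which is $O(1/n)$ in this sub-threshold regime, or — to obtain the exact density and rate together — invoke the refined algorithmic analysis of Coja-Oghlan, whose combinatorial algorithm finds a satisfying assignment with failure probability $n^{-\Omega(1)}$ precisely up to $(1 - o_k(1)) 2^k \ln k / k$.

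The main obstacle is thus not existence but the quantitative tail: the second-moment computation, however one optimizes $\gamma$, delivers only a positive constant, so the sharp $1 - O(1/n)$ bound needed to feed into the definition of $\alpha_k$ in Lemma~\ref{lemma:abbe} must come entirely from the threshold-sharpness and concentration inputs (or from the cited algorithmic guarantee), not from the moment method itself. A secondary technical point is verifying that $\Lambda(z)$ has no spurious maximum away from $z = 1/2$, but given how far below $2^k \ln 2 - O(k)$ the target density sits, this verification is routine rather than delicate in the regime of interest.
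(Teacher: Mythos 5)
The paper does not prove this lemma at all: the statement is Theorem 1.3 of \cite{CR13} (Coja-Oghlan and Reichman), and the paper's entire proof is a one-line citation to it. So the question is whether your reconstruction would actually deliver the stated bound, and it would not. The machinery in your first two paragraphs (Achlioptas--Peres weighted second moment, then Friedgut sharpness) is standard and correctly described, but as you yourself concede it tops out at $1-o(1)$: Friedgut's theorem is inherently rate-free, so nothing in that part contributes to the $1-O(1/n)$ claim, which is the only nontrivial content of the lemma. The gap is in your third paragraph. Your option (i) --- a first-moment bound over small minimal-unsatisfiable subformulas --- does not close the argument for $k \geq 3$: unsatisfiability below the threshold is not known to be reducible to the presence of a bounded-size unsatisfiable core, so the small-core computation bounds only one of the two sources of unsatisfiability, and the complementary ``global'' unsatisfiability events are controlled only by Friedgut's $o(1)$, with no rate. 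Proving that those global events have probability $O(1/n)$ at densities up to $(1-o_k(1))2^k\ln(k)/k$ is precisely the hard content of the cited theorem, which rests on planting and concentration-of-the-partition-function arguments tied to the shattering threshold; it is not a routine verification. (The small-core calculation does give $O(1/n)$ for $k=2$, but that case is special and the reduction to small cores there requires its own nontrivial argument.)

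Your option (ii) --- invoking Coja-Oghlan's Fix-type algorithmic guarantee --- is closer in spirit to a legitimate proof-by-citation, but it proves a different statement: a failure probability of $n^{-\Omega(1)}$ is not $O(1/n)$ unless the exponent is at least $1$, so it does not yield the lemma as written (though it would suffice for the paper's downstream use, since the definition of $\alpha_k$ in Lemma \ref{lemma:abbe} only needs unsatisfiability probability $O(1/(\log n)^{1+\delta})$). There is also a conceptual miscalibration worth flagging: you describe the density $(1-o_k(1))2^k\ln(k)/k$ as having ``enormous room to spare,'' which is true for the $\Omega(1)$ existence bound but misses why this particular density appears. It is not an artifact of sitting far below $2^k\ln 2$; it is the shattering/algorithmic threshold, i.e., exactly the frontier up to which the quantitative $1-O(1/n)$ control is known to hold. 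The honest proof of this lemma is the citation the paper gives.
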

\begin{proof}
	The proof of this is given as \textbf{Theorem 1.3} of \cite{CR13}.
\end{proof}

It is difficult to compute $\phi_k(r)$ directly. Instead, Lemma~\ref{lemma:achlioptas_numsol} provides a weaker but explicit lower bound on $\# \F$.
\begin{lemma}
	\label{lemma:achlioptas_numsol}
	Let $k \geq 3$, $\epsilon > 0$, and $r \geq 0$. Let $\beta_k$ be the smallest positive solution to $\beta_k (2-\beta_k)^{k-1} = 1$ and define $\Lfull = 4 (((1 - \beta_k /2)^k - 2^{-k})^2 / (1-\beta_k)^{k})^r$.
	
	If $r < 2^k \ln(2) - \frac{1}{2}((k+1)\ln(2) + 3)$,
	then w.h.p.\ $\frac{1}{2}(\Lfull - \epsilon)^{n/2} \leq \# \F$.
\end{lemma}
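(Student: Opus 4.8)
The plan is to apply the weighted (symmetric) second-moment method of Achlioptas and Peres. Rather than counting satisfying assignments directly---for which the second moment is dominated by highly correlated pairs and fails---I would introduce a weight that penalizes assignments satisfying a clause in a lopsided way. Concretely, for an assignment $\sigma$ and a clause $c$ in which $\sigma$ satisfies $j$ of the $k$ literals, assign the weight $w_j = (1-\beta_k)^{\,j - k/2}$ for $j \geq 1$ and $w_0 = 0$, and set $W = \sum_{\sigma} \prod_{a=1}^{\ceil{rn}} w(\sigma, c_a)$. The parameter $\beta_k$ is exactly the balancing parameter: the defining equation $\beta_k(2-\beta_k)^{k-1} = 1$ is the first-order condition making the overlap $z=1/2$ a stationary point of the second-moment exponent, and the $-k/2$ shift in the exponent normalizes the weight so that $\E{W} = (\Lfull)^{n/2}$.

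First I would compute the first moment. For a fixed $\sigma$, a uniformly random clause satisfies each of its literals independently with probability $1/2$, so the per-clause expected weight is $\phi_1 = 2^{-k}\sum_{j=1}^{k}\binom{k}{j}(1-\beta_k)^{\,j-k/2} = \big((1-\beta_k/2)^k - 2^{-k}\big)(1-\beta_k)^{-k/2}$, and by independence of the clauses $\E{W} = 2^n \phi_1^{\,rn} = (\Lfull)^{n/2}$. This is a routine calculation and reproduces the closed form of $\Lfull$ stated in the lemma.

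The heart of the argument is the second moment. Writing the overlap of two assignments as $z = \ell/n$, one obtains $\E{W^2} = 2^n \sum_{\ell=0}^{n}\binom{n}{\ell}\,\phi_2(\ell/n)^{\,rn}$, where $\phi_2(z)$ is the expected product of the two clause-weights over a random clause, conditioned on the assignments agreeing on a $z$-fraction of the variables; by design $\phi_2(1/2) = \phi_1^{2}$, so the $z=1/2$ term reproduces $(\E{W})^{2}$ exactly. The task is therefore to show that $g(z) = H(z) + r\log_2 \phi_2(z)$, with $H$ the binary entropy, attains its global maximum on $[0,1]$ at $z=1/2$. The choice of $\beta_k$ forces $g'(1/2)=0$ and a local expansion shows $z=1/2$ is a local maximum; the genuinely hard part---and the step I expect to be the main obstacle---is ruling out competing maxima, in particular as $z \to 1$ (nearly identical pairs) and at any interior critical point. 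This is exactly where the hypothesis $r < 2^k\ln 2 - \tfrac12((k+1)\ln 2 + 3)$ is used: it is the largest density for which the boundary and interior values of $g$ remain strictly below $g(1/2)$. Once $g(1/2)$ is the strict global maximum, Laplace's method bounds the sum by a constant multiple of its central term, giving $\E{W^2} \leq C\,(\E{W})^{2}$ for a constant $C=C(k,r)$.

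With the second moment controlled, the Paley--Zygmund inequality yields $\P(W \geq \tfrac12\E{W}) \geq 1/(4C)$, a positive constant. Converting this weighted bound into a bound on the true count $\#\F$ uses the standard refinement of the method: one restricts the sum to the assignments that are balanced across clauses (satisfying close to $k/2$ literals per clause), on which the weight is at most one, so that $\#\F$ dominates that restricted weighted mass, which is itself a constant fraction of $\E{W}$; this yields $\#\F \geq \tfrac12(\Lfull)^{n/2}$ with constant probability. To upgrade to the claimed w.h.p. statement I would invoke self-averaging of the free-entropy density: $\tfrac1n\log_2\#\F$ concentrates around a deterministic limit up to $o(1)$ fluctuations, so a lower bound holding with constant probability must hold w.h.p. once its value is relaxed by an arbitrarily small constant. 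This is precisely the role of the slack $\epsilon$, which absorbs the $o(1)$ error together with the factor $\tfrac12$; satisfiability of $\F$ throughout this density range, needed for the concentration statement to be non-vacuous, is supplied by Lemma~\ref{lemma:alpha}. The first-moment, Paley--Zygmund, and self-averaging steps are standard, so the crux of the work is the global optimization of $g$.
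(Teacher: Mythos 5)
Your method is the right one, and it is worth saying clearly why: the paper itself offers no proof of this lemma beyond a citation --- it imports Theorem 6 of \cite{ACR11} wholesale, with $\Lfull$ taken from equation (20) there --- and what you have sketched is precisely the Achlioptas--Peres weighted second-moment argument on which that cited theorem rests. Your first-moment computation is correct ($2^{-k}\sum_{j\geq 1}\binom{k}{j}(1-\beta_k)^{j-k/2} = ((1-\beta_k/2)^k - 2^{-k})(1-\beta_k)^{-k/2}$, whose $2r$-th power times $4$ is exactly $\Lfull$); your reading of $\beta_k(2-\beta_k)^{k-1}=1$ is also correct --- writing $\eta = 1-\beta_k$ it becomes $(1-\eta)(1+\eta)^{k-1}=1$, the condition that centers the tilted per-clause distribution at $k/2$ satisfied literals and hence makes overlap $1/2$ stationary; and the Paley--Zygmund step and the truncation to balanced assignments are faithful to the source. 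You also honestly flag that the global maximization of $g(z)$ is the technical core; that deferral is where the constant $\frac{1}{2}((k+1)\ln 2 + 3)$ is actually earned, but since you identify it explicitly I do not count it as the decisive problem.

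The decisive problem is your last step, and it is a genuine gap, not a formality. You boost the constant-probability bound to a w.h.p.\ bound by invoking ``self-averaging of the free-entropy density,'' i.e.\ concentration of $\frac{1}{n}\log_2 \#\F$. No such concentration is available in the range this lemma covers: the concentration result the paper has (Lemma \ref{lemma:abbe}, from \cite{AM14}) holds only for $r < \alpha_k$, and all that is known is $\alpha_k \geq 1$, improved to $\alpha_k \geq (1-o_k(1))2^k\ln(k)/k$ by Lemma \ref{lemma:alpha} --- whereas this lemma claims its conclusion for all $r < 2^k\ln 2 - \frac{1}{2}((k+1)\ln 2 + 3)$. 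Already for $k=3$ the lemma reaches $r \approx 2.66$ while concentration is known only below $\alpha_3$ (all we know is $\alpha_3 \geq 1$); for large $k$ the mismatch is a factor of order $k/\ln k$. Your appeal to Lemma \ref{lemma:alpha} for satisfiability ``throughout this density range'' has the same defect: it covers only $r \lesssim 2^k\ln(k)/k$, not densities near $2^k \ln 2$. Proving free-entropy concentration up to $2^k\ln 2 - O(k)$ is an open problem, not a standard ingredient; indeed, whatever mechanism \cite{ACR11} uses to state its Theorem 6 w.h.p., it cannot be the concentration you invoke, since \cite{AM14} postdates it and remains unavailable at these densities. To close the gap you must extract the w.h.p.\ statement from the second-moment machinery itself (e.g.\ via sharp-threshold-type arguments for the relevant monotone property), which is exactly the part of the cited proof your sketch replaces with an unavailable tool.
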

\begin{proof}
	The proof of this is given on page 264 of \cite{ACR11} within Section 6 (\textbf{Proof of Theorem 6}); the definition of $\Lfull$ is given as equation (20).
\end{proof}

Before we analyze how the solution space of $\F$ interacts with the solution space of $\Q$, we must characterize the solution space of $\Q$. The following lemma shows that the solutions of $\Q$ are pairwise independent, meaning that a single satisfying assignment of $\Q$ gives no information on other satisfying assignments.
\begin{lemma}
	\label{lemma:gomes_xor:pairwise}
	Let $n \geq 0$ and $s \geq 0$. If $\sigma$ and $\sigma'$ are distinct assignments of truth values to the variables $\{X_1, \cdots, X_n\}$:
	\begin{enumerate}
		\item $\P(\sigma~\text{satisfies}~\Q) = 2^{-\ceil{sn}}$
		\item $\P(\sigma~\text{satisfies}~\Q~|~ \sigma'~\text{satisfies}~\Q) = 2^{-\ceil{sn}}$
	\end{enumerate}
\end{lemma}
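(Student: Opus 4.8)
The plan is to model each random XOR-clause algebraically as a uniformly random affine equation over $\mathbb{F}_2$, and then exploit the independence of the $\ceil{sn}$ clauses. Since there are $2^n$ possible variable-subsets and two choices of the appended constant, a uniformly chosen XOR-clause over $n$ variables corresponds bijectively to a uniformly random pair $(a,b) \in \mathbb{F}_2^n \times \mathbb{F}_2$, where $a$ is the indicator vector of the included variables and $b \in \mathbb{F}_2$ is the affine constant induced by the appended bit. Under this identification an assignment $x \in \mathbb{F}_2^n$ satisfies the clause if and only if $\langle a, x\rangle = b$, the inner product taken mod $2$. I would first record that, because the $\ceil{sn}$ clauses of $\Q$ are drawn independently, it suffices to analyze a single random clause and then raise the resulting per-clause probabilities to the power $\ceil{sn}$.

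For part (a), fix $\sigma$ and a single random clause $(a,b)$. Conditioning on $a$, the quantity $\langle a, \sigma\rangle$ is determined, while $b$ remains uniform on $\mathbb{F}_2$ and independent of $a$; hence $\P(\langle a,\sigma\rangle = b) = 1/2$ regardless of $\sigma$. (This also cleanly handles the two degenerate empty-subset clauses, since the symmetry comes entirely from the uniform constant $b$, not from $\sigma$.) Independence of the $\ceil{sn}$ clauses then immediately gives $\P(\sigma~\text{satisfies}~\Q) = 2^{-\ceil{sn}}$.

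For part (b), the key observation is pairwise independence at the level of a single clause. Writing $\delta = \sigma \oplus \sigma'$, distinctness of $\sigma$ and $\sigma'$ forces $\delta \neq 0$. Both assignments satisfy the clause $(a,b)$ exactly when $\langle a,\sigma\rangle = b$ and $\langle a,\sigma'\rangle = b$, which in particular requires $\langle a, \delta\rangle = 0$. Since $\delta \neq 0$, the linear functional $a \mapsto \langle a, \delta\rangle$ is balanced, so $\P(\langle a,\delta\rangle = 0) = 1/2$; conditioned on that event (and on $a$) the common value $\langle a,\sigma\rangle$ is determined, and the uniform independent $b$ matches it with probability $1/2$. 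Thus $\P(\sigma~\text{and}~\sigma'~\text{both satisfy the clause}) = 1/4 = \P(\sigma~\text{satisfies})\cdot\P(\sigma'~\text{satisfies})$, i.e.\ the two satisfaction events are independent for a single clause. Multiplying over the $\ceil{sn}$ independent clauses yields $\P(\sigma,\sigma'~\text{both satisfy}~\Q) = 2^{-2\ceil{sn}}$, and dividing by $\P(\sigma'~\text{satisfies}~\Q) = 2^{-\ceil{sn}}$ from part (a) gives the claimed conditional probability $2^{-\ceil{sn}}$.

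There is no deep obstacle here; the only things to get right are the translation of the combinatorial clause distribution into the uniform affine-equation model and the verification that the per-clause satisfaction events are independent. The single substantive algebraic fact is that a nonzero linear functional over $\mathbb{F}_2^n$ takes each of its two values on exactly half of its inputs, which is precisely the point at which distinctness of $\sigma$ and $\sigma'$ enters.
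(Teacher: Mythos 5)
Your proof is correct, and it takes essentially the same route as the source the paper relies on: the paper gives no inline proof, deferring entirely to Lemma 1 of \cite{GSS07}, whose content is exactly your argument. Your self-contained $\mathbb{F}_2$ formulation---the uniform parity bit $b$ forces per-clause satisfaction probability $1/2$; distinctness makes $\delta = \sigma \oplus \sigma'$ nonzero, so the balanced functional $a \mapsto \langle a, \delta \rangle$ gives per-clause pairwise independence; independence across the $\ceil{sn}$ clauses then yields both claims---is the standard argument behind that citation, with the degenerate empty-subset clauses correctly handled by the randomness of $b$.
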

\begin{proof}
	The proof of this is given in the proof of Lemma 1 of \cite{GSS07}.
\end{proof}

The following lemma bounds from below the probability that a formula $H$ (in Lemma \ref{lemma:cnfxor_sat} we take $H = \F$) remains satisfiable after including XOR-clauses on top of $H$. This result and proof is similar to Corollary 3 from ~\cite{GSS06}.
\begin{lemma}
	\label{lemma:gomes_xor:unsatbound}
	Let $\alpha \geq 1$, $s \geq 0$, $n \geq 0$, and let $H$ be a formula defined over $\{X_1, \cdots, X_n\}$.
	Then
	$\P(H \land \Q \text{ is satisfiable} ~|~ \# H \geq 2^{\ceil{sn} + \alpha}) \geq 1 - 2^{-\alpha}$.
\end{lemma}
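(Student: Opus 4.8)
The plan is to apply the second-moment method to the number of satisfying assignments of $H$ that survive the XOR-clauses, exploiting the pairwise independence established in Lemma~\ref{lemma:gomes_xor:pairwise}. First I would condition on the event $\#H \geq 2^{\ceil{sn}+\alpha}$ and let $S$ denote the set of satisfying assignments of $H$. Since $\Q$ is generated independently of $H$, on this event $S$ is a fixed set with $|S| \geq 2^{\ceil{sn}+\alpha}$ and all remaining randomness comes from the choice of the $\ceil{sn}$ XOR-clauses. For each $\sigma \in S$ let $Y_\sigma$ be the indicator that $\sigma$ satisfies $\Q$, and set $Y = \sum_{\sigma \in S} Y_\sigma$, so that $Y$ counts the satisfying assignments of $H \land \Q$ and $H \land \Q$ is satisfiable if and only if $Y \geq 1$. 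The goal thus reduces to showing $\P(Y = 0) \leq 2^{-\alpha}$.

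Next I would compute the first two moments of $Y$. By Lemma~\ref{lemma:gomes_xor:pairwise}(a) we have $\P(Y_\sigma = 1) = 2^{-\ceil{sn}}$, so by linearity $\E{Y} = |S| \cdot 2^{-\ceil{sn}} \geq 2^{\alpha}$. By Lemma~\ref{lemma:gomes_xor:pairwise}(b) the variables $\{Y_\sigma\}_{\sigma \in S}$ are pairwise independent, so every covariance term vanishes and $\Var{Y} = \sum_{\sigma \in S} \Var{Y_\sigma} = |S|\,(2^{-\ceil{sn}} - 2^{-2\ceil{sn}}) \leq \E{Y}$. The crucial observation is that pairwise independence is precisely what is needed to kill the cross terms in the variance, even though the $Y_\sigma$ are very far from mutually independent.

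Finally I would invoke Chebyshev's inequality. Since $Y$ is a nonnegative integer-valued random variable with $\E{Y} \geq 2^{\alpha} > 0$,
$$\P(Y = 0) \leq \P\bigl(|Y - \E{Y}| \geq \E{Y}\bigr) \leq \frac{\Var{Y}}{\E{Y}^2} \leq \frac{1}{\E{Y}} \leq 2^{-\alpha},$$
where the third inequality uses $\Var{Y} \leq \E{Y}$ and the last uses $\E{Y} \geq 2^{\alpha}$. Hence $\P(H \land \Q \text{ is satisfiable} \mid \#H \geq 2^{\ceil{sn}+\alpha}) = \P(Y \geq 1) \geq 1 - 2^{-\alpha}$, as claimed.

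I do not anticipate a genuine technical obstacle: the argument is a clean second-moment estimate and closely mirrors the structure of Corollary~3 of \cite{GSS06}. The only points demanding care are bookkeeping ones, namely confirming that conditioning on $\#H$ truly freezes the solution set $S$ while leaving the distribution of $\Q$ untouched, so that Lemma~\ref{lemma:gomes_xor:pairwise} applies verbatim, and checking that the hypothesis $\alpha \geq 1$ is used only to guarantee $\E{Y} \geq 2^{\alpha}$ (so that the final bound is meaningful).
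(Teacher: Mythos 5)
Your proposal is correct and follows essentially the same route as the paper's proof: both define indicator variables over the satisfying assignments of $H$, use the pairwise-independence lemma to kill the covariance terms so that $\Var{Y} \leq \E{Y}$, and then apply Chebyshev's inequality together with $\E{Y} = \#H \cdot 2^{-\ceil{sn}} \geq 2^{\alpha}$ to bound the failure probability by $2^{-\alpha}$. The only cosmetic difference is that you compute $\Var{Y_\sigma}$ exactly where the paper just bounds it by $\E{Y_\sigma}$; nothing of substance changes.
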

\begin{proof}
	Let $R$ be the set of all truth assignments to the variables in $X$ that satisfy $H$; there are $\#H$ such truth assignments. For every truth assignment $\sigma \in R$, let $Y_\sigma$ be a 0-1 random variable that is 1 if $\sigma$ satisfies $H \land \Q$ and 0 otherwise. Note that $\Var{Y_\sigma} = \E{Y_\sigma^2} - \E{Y_\sigma}^2 \leq \E{Y_\sigma^2}$. Since $Y_\sigma$ is a 0-1 random variable, $Y_\sigma^2 = Y_\sigma$ and thus $\Var{Y_\sigma} \leq \E{Y_\sigma}$.
	
	Let $\sigma$ and $\sigma'$ be distinct truth assignments in $R$. By Lemma \ref{lemma:gomes_xor:pairwise}, 	$\E{Y_\sigma Y_{\sigma'}} = \E{Y_\sigma} \E{Y_{\sigma'}} = 2^{-\ceil{sn}} \cdot 2^{-\ceil{sn}}$. Thus $\Covar{Y_\sigma}{Y_{\sigma'}} = \E{Y_\sigma Y_{\sigma'}} - \E{Y_\sigma} \E{Y_{\sigma'}} = 0$.
	
	Let the random variable $Y$ be the number of solutions to $H \land \Q$, so $Y = \# (H \land \Q) = \sum_{\sigma} Y_\sigma$. Thus $\Var{Y} = \Var{\sum_{\sigma} Y_\sigma} = \sum_{\sigma} \Var{Y_\sigma} + \sum_{\sigma \neq \sigma'} \Covar{Y_\sigma}{Y_{\sigma'}}$. Since the covariance of $Y_\sigma$ and $Y_{\sigma'}$ is 0 for all pairs of distinct truth assignments $\sigma$ and $\sigma'$ in $R$, we get that $\Var{Y} = \sum_{\sigma} \Var{Y_\sigma}$. Since $\Var{Y_\sigma} \leq \E{Y_\sigma}$ for all truth assignments $\sigma$ in $R$, we get that $\Var{Y} \leq \sum_{\sigma} \E{Y_\sigma}$. Since $\E{Y} = \E{\sum_{\sigma} Y_\sigma} = \sum_{\sigma} \E{Y_\sigma}$, we conclude that $\Var{Y} \leq \E{Y}$. Moreover, since $\E{Y_\sigma} = \P(\sigma~\text{satisfies}~\Q) = 2^{-\ceil{sn}}$ we get $\E{Y} = \# H \cdot 2^{-\ceil{sn}}$.
	
	Let the event $\neg E_n$ denote that $H \land \Q$ is unsatisfiable. Thus if $\neg E_n$ occurs then then $Y=0$ and so $\left|Y - \E{Y}\right| \geq \E{Y}$. This implies that $\P(\neg E_n) \leq \P(\left|Y - \E{Y}\right| \geq \E{Y})$. Chebyshev's inequality says that $\P(\left|Y- \E{Y}\right| \geq \E{Y}) \leq \Var{Y}~/~\E{Y}^2$. It follows that $\P(\neg E_n) \leq \Var{Y}~/~\E{Y}^2$. Since $\Var{Y} \leq \E{Y}$, we get that $\P(\neg E_n) \leq \E{Y}^{-1}$. Therefore by plugging in the value for $\E{Y}$ we get $\P(\neg E_n) \leq (\# H)^{-1} \cdot 2^{\ceil{sn}}$. 
	
	Finally, if we assume that $\#H \geq 2^{\ceil{sn} + \alpha}$ then $(\# H)^{-1} \cdot 2^{\ceil{sn}} \leq 2^{-\alpha}$. Therefore $\P(\neg E_n ~|~ \#H \geq 2^{\ceil{sn} + \alpha}) \leq 2^{-\alpha}$.
\end{proof}

Using the key behavior of XOR-clauses described in Lemma \ref{lemma:gomes_xor:unsatbound}, we can transform lower bounds (w.h.p.) on the number of solutions to $\F$ into lower bounds on the location of a possible $k$-CNF-XOR phase-transition.
\begin{lemma}
	\label{lemma:cnfxor_sat}
	Let $k \geq 2$, $s \geq 0$, and $r \geq 0$. Let $B_1$, $B_2$, $\cdots$ be an infinite convergent sequence of positive real numbers such that $B_i^n \leq \#\F$ occurs w.h.p.\ for all $i \geq 1$. If $s < \log_2(\lim_{i \to \infty} B_i)$, then w.h.p.\ $\FQ$ is satisfiable.
\end{lemma}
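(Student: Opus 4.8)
The plan is to combine the with-high-probability lower bound on $\#\F$ with the conditional satisfiability guarantee of Lemma~\ref{lemma:gomes_xor:unsatbound}, letting the slack parameter $\alpha$ in that lemma grow linearly in $n$. Write $B = \lim_{i \to \infty} B_i$. Since $s < \log_2(B)$ by hypothesis and $B_i \to B$, I would first fix a \emph{single} index $i$ together with a gap $\delta > 0$ for which $\log_2(B_i) \geq s + \delta$; such a choice exists by continuity of $\log_2$ and convergence of the $B_i$ (take, e.g., $\delta = \tfrac{1}{2}(\log_2(B) - s)$ and $i$ large). For this fixed $i$, the hypothesis gives that the event $\mathcal{A}_n := \{\, B_i^n \leq \#\F \,\}$ holds w.h.p., that is, $\P(\mathcal{A}_n) \to 1$ as $n \to \infty$.

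Next I would decouple the two clause types via $\FQ = \F \land \Q$ with $\F$ and $\Q$ generated independently, and quantify the slack available on $\mathcal{A}_n$. On $\mathcal{A}_n$ we have $\#\F \geq B_i^n \geq 2^{(s+\delta)n}$. Setting $\alpha_n := (s+\delta)n - \ceil{sn}$, this reads $\#\F \geq 2^{\ceil{sn} + \alpha_n}$, and since $\ceil{sn} \leq sn + 1$ we obtain $\alpha_n \geq \delta n - 1$, which exceeds $1$ for all sufficiently large $n$ and tends to $\infty$. I would then apply Lemma~\ref{lemma:gomes_xor:unsatbound} with $H = \F$ and $\alpha = \alpha_n$: because $\F$ is independent of $\Q$, conditioning on any realization of $\F$ lying in $\mathcal{A}_n$ yields $\P(\F \land \Q \text{ is satisfiable} \mid \F) \geq 1 - 2^{-\alpha_n}$.

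Finally I would assemble the pieces by conditioning on $\F$, using that $\mathcal{A}_n$ is determined by $\F$:
\[
\P(\FQ \text{ is satisfiable}) \;\geq\; \E{\mathbf{1}_{\mathcal{A}_n}\,\P(\F \land \Q \text{ is satisfiable} \mid \F)} \;\geq\; (1 - 2^{-\alpha_n})\,\P(\mathcal{A}_n).
\]
Since $\P(\mathcal{A}_n) \to 1$ and $2^{-\alpha_n} \to 0$ as $n \to \infty$, the right-hand side tends to $1$, which is exactly the claim that $\FQ$ is satisfiable w.h.p.

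The main obstacle is the conditioning step rather than any computation. Lemma~\ref{lemma:gomes_xor:unsatbound} is stated for a fixed formula $H$ with a fixed slack $\alpha$, whereas here $\F$ is random and the slack $\alpha_n$ must be permitted to depend on $n$. The care required is in invoking independence of $\F$ and $\Q$ so that the lemma applies conditionally to each realization of $\F$ in $\mathcal{A}_n$, and in checking that the $n$-dependent choice $\alpha_n$ simultaneously satisfies the hypothesis $\alpha_n \geq 1$ for large $n$ and forces $2^{-\alpha_n} \to 0$. Everything else — the selection of $i$ and $\delta$, and the bookkeeping of $\ceil{sn}$ versus $sn$ — is routine.
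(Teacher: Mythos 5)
Your proposal is correct and follows essentially the same route as the paper's proof: fix a single index $i$ with $\log_2 B_i > s$, use the w.h.p.\ event $\{B_i^n \leq \#\F\}$, invoke Lemma~\ref{lemma:gomes_xor:unsatbound} with slack $\alpha$ growing linearly in $n$, and multiply the two probabilities. Your conditioning on the realization of $\F$ (using independence of $\F$ and $\Q$) is just a more explicit rendering of the paper's step bounding $\P(E_n \mid L_n)$, so there is no substantive difference.
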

\begin{proof}
	For all integers $n \geq 0$, let the event $E_n$ denote the event when $\FQ$ is satisfiable. We would like to show that $\P(E_n)$ converges to $1$ as $n \to \infty$. 
	
	The general idea of the proof follows. We first decompose $\FQ$ as $\FQ = \F \land \Q$. Let the event $L_n$ denote the event when the number of solutions of $\F$ is bounded from below (by a lower bound to be specified later). We show that $L_n$ occurs w.h.p.. Next, we use Lemma \ref{lemma:gomes_xor:unsatbound} to bound from below the probability that $\F \land \Q$ remains satisfiable given that $\F$ has enough solutions; we use this to show that $\P(E_n~|~L_n)$ converges to $1$ as $n \to \infty$. Finally, we combine these results to prove that $\P(E_n)$ converges to $1$.
	
	Since $2^s < \lim_{i \to \infty} B_i$, there is some integer $i \geq 1$ such that $2^s < B_i$. Define the event $L_n$ as the event when $\#\F \geq B_i^n$. Then $L_n$ occurs w.h.p.\ by hypothesis.
	
	Next, we show that $\P(E_n~|~L_n)$ converges to 1. Choose $\delta > 0$ and $N > 0$ such that $2^{s + \delta + 1/N} < B_i$; we can always find sufficiently small $\delta$ and sufficiently large $N$ such that this holds. Since we are concerned only with the behavior of $\P(E_n ~|~ L_n)$ in the limit, we can restrict our attention only to large enough $n$. In particular, consider $n > 2N$. Then we get that $2^{sn + \delta n + 2} < B_i^n$ and so $2^{\ceil{sn} + \delta n + 1} < B_i^n$. Let $\alpha = \delta n + 1$, so that $2^{\ceil{sn} + \alpha} \leq B_i^n$. Then Lemma \ref{lemma:gomes_xor:unsatbound} says that $\P(E_n ~|~ L_n) \geq 1 - 2^{-\delta n -1}$. Since $1 - 2^{-\delta n - 1}$ converges to $1$ as $n \to \infty$, $\P(E_n ~|~ L_n)$ must also converge to $1$.
	
	Thus both $\P(E_n ~|~ L_n)$ and $\P(L_n)$ converge to 1 as $n \to \infty$. Since $\P(E_n \cap L_n) = \P(E_n~|~L_n) \cdot \P(L_n)$, this implies that $\P(E_n \cap L_n)$ also converges to 1. Since $\P(E_n \cap L_n) \leq \P(E_n) \leq 1$, this implies that $\P(E_n)$ converges to 1.
\end{proof}

Finally, it remains only to use Lemma \ref{lemma:cnfxor_sat} to obtain bounds on the $k$-CNF-XOR phase-transition. The tight lower bound on $\# \F$ from Lemma \ref{lemma:abbe}.\ref{lemma:abbe:lower} corresponds to a tight lower bound on the location of the phase-transition.
\begin{lemma}
	\label{lemma:mainresult:tight:sat}
	Let $k \geq 2$, and let $\alpha_k$, $\mathcal{C}_k$, and $\phi_k(r)$ be as defined in Lemma \ref{lemma:abbe}. 
	For all $r \in [0, \alpha_k) \backslash \mathcal{C}_k$ and $s \in [0, \phi_k(r))$, $\FQ$ is satisfiable w.h.p..
\end{lemma}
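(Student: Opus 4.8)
The plan is to reduce the statement directly to Lemma \ref{lemma:cnfxor_sat}, which already packages the interaction between the solution space of the $k$-CNF part and the XOR-clauses. Since $k$ and $r$ are fixed with $r \in [0, \alpha_k) \backslash \mathcal{C}_k$, the restriction on $r$ is exactly what makes the tight lower bound of Lemma \ref{lemma:abbe}.\ref{lemma:abbe:lower} available: for every $\epsilon > 0$, w.h.p. $(2^{\phi_k(r) - \epsilon})^n \leq \#\F$. The whole proof is then a matter of feeding this bound into Lemma \ref{lemma:cnfxor_sat} in the correct form.

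First I would construct the convergent sequence of positive reals required by Lemma \ref{lemma:cnfxor_sat}. The natural choice is $B_i = 2^{\phi_k(r) - 1/i}$ for each integer $i \geq 1$. Each $B_i$ is a positive real number, and the sequence converges with $\lim_{i \to \infty} B_i = 2^{\phi_k(r)}$.

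Next I would verify the two hypotheses of Lemma \ref{lemma:cnfxor_sat}. For the first, fix $i \geq 1$ and apply Lemma \ref{lemma:abbe}.\ref{lemma:abbe:lower} with $\epsilon = 1/i$; this gives $B_i^n = (2^{\phi_k(r) - 1/i})^n \leq \#\F$ w.h.p., which is precisely the required condition on $B_i$. For the second, since $\log_2(\lim_{i \to \infty} B_i) = \phi_k(r)$ and we are given $s < \phi_k(r)$, the density hypothesis $s < \log_2(\lim_{i \to \infty} B_i)$ is satisfied. Lemma \ref{lemma:cnfxor_sat} then yields that w.h.p. $\FQ$ is satisfiable, which completes the argument.

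I expect essentially no genuine obstacle here, because all of the substantive work has been discharged to earlier results: the deep concentration fact that $\#\F$ sits at $2^{\phi_k(r)\,n}$ (Lemma \ref{lemma:abbe}, imported from Abbe--Montanari) and the second-moment / Chebyshev argument that converts a high-probability lower bound on $\#\F$ into satisfiability of $\F \land \Q$ (Lemma \ref{lemma:cnfxor_sat}, built on Lemma \ref{lemma:gomes_xor:unsatbound}). The one point demanding a little care is the order of quantifiers: the sequence $\{B_i\}$ is selected once, depending only on the fixed $r$, and only then, for each fixed $i$, is the lower bound on $\#\F$ an asymptotic statement in $n$. It is exactly to accommodate this ``choose $\epsilon$, then send $n \to \infty$'' structure that Lemma \ref{lemma:cnfxor_sat} is stated in terms of a convergent sequence of bounds rather than a single inequality, so invoking it with $B_i = 2^{\phi_k(r) - 1/i}$ is the clean way to pass from the $\epsilon$-indexed family in Lemma \ref{lemma:abbe}.\ref{lemma:abbe:lower} to the desired conclusion at the threshold $s = \phi_k(r)$.
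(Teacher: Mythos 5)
Your proposal is correct and matches the paper's own proof essentially verbatim: the same sequence $B_i = 2^{\phi_k(r)-1/i}$, the same application of Lemma \ref{lemma:abbe}.\ref{lemma:abbe:lower} with $\epsilon = 1/i$, and the same invocation of Lemma \ref{lemma:cnfxor_sat} via $\lim_{i\to\infty} B_i = 2^{\phi_k(r)}$. Nothing further is needed.
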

\begin{proof}
	Let $B_i = 2^{\phi_k(r)-1/i}$. By Lemma \ref{lemma:abbe}.\ref{lemma:abbe:lower}, $B_i^n \leq \#\F$ w.h.p.\ for all $i \geq 1$. Furthermore, $\lim_{i \to \infty} B_i = 2^{\phi_k(r)}$ and so $s < \log_2(\lim_{i \to \infty} B_i)$. Thus $\FQ$ is satisfiable w.h.p.\ by Lemma \ref{lemma:cnfxor_sat}.
\end{proof}

The weaker lower bound on $\# \F$ from Lemma \ref{lemma:achlioptas_numsol} corresponds to a weaker lower bound on the location of the phase-transition.
\begin{lemma}
	\label{lemma:mainresult:loose:sat}
	Let $k \geq 3$, $s \geq 0$, and $r \geq 0$. If $r < 2^k \ln(2) - \frac{1}{2}(k+1)\ln(2) + \frac{3}{2}$ and $s < \frac{1}{2} \log_2(\Lfull)$, then $\FQ$ is satisfiable w.h.p..
\end{lemma}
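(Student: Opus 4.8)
The plan is to mirror the proof of Lemma~\ref{lemma:mainresult:tight:sat}, substituting the explicit (though weaker) lower bound on $\# \F$ from Lemma~\ref{lemma:achlioptas_numsol} for the tight bound of Lemma~\ref{lemma:abbe}. The hypothesis on $r$ is precisely the one under which Lemma~\ref{lemma:achlioptas_numsol} guarantees a lower bound on $\# \F$, so that bound is available throughout. The goal is to package its conclusion into a convergent sequence $B_1, B_2, \dots$ of positive reals satisfying $B_i^n \leq \# \F$ w.h.p.\ for every $i$ and $s < \log_2(\lim_{i \to \infty} B_i)$, after which Lemma~\ref{lemma:cnfxor_sat} yields that $\FQ$ is satisfiable w.h.p. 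First I would rewrite the hypothesis $s < \frac{1}{2}\log_2(\Lfull)$ as $2^s < \sqrt{\Lfull}$, identifying $\sqrt{\Lfull}$ as the intended limit, and take any strictly increasing sequence $B_i \nearrow \sqrt{\Lfull}$ of positive reals; then $\log_2(\lim_i B_i) = \frac{1}{2}\log_2(\Lfull) > s$, securing the second requirement of Lemma~\ref{lemma:cnfxor_sat}.

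The one point needing care---and the main bookkeeping obstacle---is reconciling the shape of Lemma~\ref{lemma:achlioptas_numsol}'s bound, $\frac{1}{2}(\Lfull - \epsilon)^{n/2} \leq \# \F$, with the required form $B_i^n \leq \# \F$, since the former carries both a leading factor $\frac{1}{2}$ and a half-exponent $n/2$. For each fixed $i$ I would set $\epsilon_i = \frac{1}{2}(\Lfull - B_i^2) > 0$, which is positive because $B_i < \sqrt{\Lfull}$ gives $B_i^2 < \Lfull$. Then $\Lfull - \epsilon_i = \frac{1}{2}(\Lfull + B_i^2) > B_i^2$, so $(\Lfull - \epsilon_i)^{1/2} > B_i$ and hence $\bigl(B_i / (\Lfull - \epsilon_i)^{1/2}\bigr)^n \to 0$. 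For all sufficiently large $n$ this ratio is at most $\frac{1}{2}$, i.e.\ $B_i^n \leq \frac{1}{2}(\Lfull - \epsilon_i)^{n/2}$. Since Lemma~\ref{lemma:achlioptas_numsol} applied with $\epsilon = \epsilon_i$ gives $\frac{1}{2}(\Lfull - \epsilon_i)^{n/2} \leq \# \F$ w.h.p., this deterministic comparison shows $B_i^n \leq \# \F$ w.h.p.\ for every $i$; the prefactor $\frac{1}{2}$ and the half-exponent $n/2$ are both absorbed by the slack $\sqrt{\Lfull} - B_i$.

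With both hypotheses of Lemma~\ref{lemma:cnfxor_sat} verified, its conclusion finishes the proof. I expect no conceptual difficulty: the abstraction of Lemma~\ref{lemma:cnfxor_sat} isolates all the probabilistic content---the second-moment argument over the XOR-clauses---so that what remains here is only the routine deterministic comparison of growth rates described above.
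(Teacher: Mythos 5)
Your proof is correct and takes essentially the same approach as the paper's: the paper likewise verifies the hypotheses of Lemma~\ref{lemma:cnfxor_sat} by building an increasing sequence converging to $\Lfull^{1/2}$ (it takes $B_i = (\Lfull - 1/i)^{1/2}$) and absorbs the prefactor $\frac{1}{2}$ and the half-exponent $n/2$ from Lemma~\ref{lemma:achlioptas_numsol} into the slack between terms of the sequence. The only difference is cosmetic: the paper compares $B_i^n$ against $\frac{1}{2}B_{i+1}^n$ for all sufficiently large $n$, whereas you compare $B_i^n$ against $\frac{1}{2}(\Lfull - \epsilon_i)^{n/2}$ via your midpoint choice of $\epsilon_i$.
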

\begin{proof}
	Let $B_i = (\Lfull - 1/i)^{1/2}$. This is an increasing sequence in $i$, so $\log_2(B_{i+1} / B_i)$ is positive for all $i \geq 1$. Consider one such $i \geq 1$ and define $N_i = 1 / \log_2(B_{i+1} / B_i)$. Then for all $n > N_i$ it follows that $2^{1/n} < B_{i+1} / B_i$ and so $B_i^n < \frac{1}{2} B_{i+1}^n$. By Lemma \ref{lemma:achlioptas_numsol}, $\frac{1}{2}B_{i+1}^n \leq \#\F$ w.h.p.\ and therefore $B_i^n < \frac{1}{2} B_{i+1}^n \leq \#\F$ w.h.p.\ as well.
		
	 Furthermore, $\lim_{i \to \infty} B_i = \Lfull^{1/2}$ and so $s < \log_2(\lim_{i \to \infty} B_i)$. Thus $\FQ$ is satisfiable w.h.p.\ by Lemma \ref{lemma:cnfxor_sat}.
\end{proof}

\subsection{A Proof of the Upper Bound}
\label{sec:theoretical:upper}

We now establish Theorem \ref{thm:mainresult:tight}.\ref{thm:mainresult:tight:unsat} and Theorem \ref{thm:mainresult:loose}.\ref{thm:mainresult:loose:unsat}, which follow directly from Lemma~\ref{lemma:mainresult:tight:unsat} and Lemma~\ref{lemma:mainresult:loose:unsat} respectively. 

Similar to Section \ref{sec:theoretical:lower}, the key idea in the proof of these lemmas is to decompose $\FQ$ into independently generated $k$-CNF and XOR formulas, so that $\FQ = \F \land \Q$. We can then bound the number of solutions to $\F$ from above with high probability and bound from below the probability that $\F$ becomes unsatisfiable after including XOR-clauses on top of $\F$.

The first of these two tasks is accomplished through Lemma \ref{lemma:abbe}.\ref{lemma:abbe:upper}, which gives a tight upper bound on $\#\F$ for small $k$-clause densities, and by Lemma \ref{lemma:numsol_bound}, which gives a weaker explicit upper bound on $\#\F$.
\begin{lemma}
	\label{lemma:numsol_bound}
	For all $\epsilon > 1$, $k \geq 2$, and $r \geq 0$, w.h.p.\ $\#\F < (2 \epsilon \cdot (1 - 2^{-k})^r)^n$.
\end{lemma}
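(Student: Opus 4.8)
The plan is to establish this upper bound through a routine first-moment argument, since the statement asserts only an upper tail event whose probability must go to zero.

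First I would compute the expected number of satisfying assignments of $\F$ by linearity of expectation. Fix an arbitrary truth assignment $\sigma$ over $\{X_1, \dots, X_n\}$. The key sub-computation is the probability that $\sigma$ satisfies a single uniformly chosen $k$-clause: among the $\binom{n}{k}2^k$ possible $k$-clauses, for each choice of $k$ variables there is exactly one sign pattern (the one negating every literal true under $\sigma$) that makes the clause false under $\sigma$, so a random clause fails to be satisfied with probability exactly $2^{-k}$. Because the $\ceil{rn}$ clauses of $\F$ are chosen independently, $\sigma$ satisfies all of them with probability $(1-2^{-k})^{\ceil{rn}}$. Summing the indicator over all $2^n$ assignments then gives $\E{\#\F} = 2^n (1-2^{-k})^{\ceil{rn}}$.

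Next I would apply Markov's inequality with threshold $T = (2\epsilon \cdot (1-2^{-k})^r)^n$, which yields
\[
\P\!\left(\#\F \geq T\right) \leq \frac{\E{\#\F}}{T} = \frac{2^n (1-2^{-k})^{\ceil{rn}}}{2^n \epsilon^n (1-2^{-k})^{rn}} = \epsilon^{-n} \, (1-2^{-k})^{\ceil{rn} - rn}.
\]
Since $\ceil{rn} \geq rn$ and $0 < 1 - 2^{-k} < 1$, the trailing power is at most $1$, so the right-hand side is bounded above by $\epsilon^{-n}$. The hypothesis $\epsilon > 1$ then forces $\epsilon^{-n} \to 0$ as $n \to \infty$, giving $\P(\#\F \geq T) \to 0$ and hence $\P(\#\F < T) \to 1$, which is exactly the claim that w.h.p.\ $\#\F < (2\epsilon \cdot (1-2^{-k})^r)^n$.

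I do not expect any genuine obstacle: the entire argument is a single application of the first-moment method, and the factor $2\epsilon$ in the stated bound leaves enough slack that the crude Markov inequality already suffices (indeed the extra constant $\epsilon>1$ is what drives the geometric decay $\epsilon^{-n}$). The only mild technicality is the ceiling $\ceil{rn}$, but this works in our favor---rounding the clause count up can only shrink $\E{\#\F}$---so it is absorbed at no cost via the elementary bound $(1-2^{-k})^{\ceil{rn}-rn} \le 1$.
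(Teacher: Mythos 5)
Your proposal is correct and follows essentially the same route as the paper: compute $\E{\#\F} = 2^n(1-2^{-k})^{\ceil{rn}}$ by a first-moment calculation and then apply Markov's inequality, absorbing the ceiling via $(1-2^{-k})^{\ceil{rn}-rn}\leq 1$. The only cosmetic difference is that you apply Markov directly at the target threshold while the paper applies it at $\epsilon^n\E{\#\F}$ and then compares thresholds; these are the same argument.
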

\begin{proof}
	Let $X = \#\F$.	For a random assignment on $n$ variables $\sigma$, note that $\P(\sigma \text{ satisfies } \F[cnfNum=1]) = (1 - 2^{-k})$. Since the $\ceil{rn}$ $k$-clauses of $\F$ were chosen independently, this implies that $\E{X} = 2^n (1 - 2^{-k})^{\ceil{rn}}$. 
	
	By Markov's inequality, we get $\P(X \geq \epsilon^n \E{X}) \leq \E{X} / (\epsilon^n \E{X}) = \epsilon^{-n}$.
	Since $1 - 2^{-k} < 1$ and so $2^n \epsilon^n (1 - 2^{-k})^{\ceil{rn}} \leq 2^n \epsilon^n (1 - 2^{-k})^{rn}$, it follows that $\P(X \geq \epsilon^n 2^n (1 - 2^{-k})^{rn}) \leq \P(X \geq \epsilon^n \E{X}) \leq \epsilon^{-n}$. Thus $\displaystyle \lim_{n \to\infty} \P(X < \epsilon^n 2^n (1 - 2^{-k})^{rn}) = 1$.
\end{proof}

The following lemma bounds from below the probability that a formula $H$ (in Lemma \ref{lemma:cnfxor_unsat} we take $H = \F$) remains satisfiable after including XOR-clauses on top of $H$. This result and proof is similar to Corollary 1 from \cite{GSS06}.
\begin{lemma}
	\label{lemma:gomes_xor:satbound}
	Let $\alpha \geq 1$, $s \geq 0$, $n \geq 0$, and let $H$ be a formula defined over $X = \{X_1, \cdots, X_n\}$. Then 
	$\P(H \land \Q~\text{is unsatisfiable} ~|~ \#H \leq 2^{\ceil{sn} - \alpha}) \geq 1 - 2^{-\alpha}$.
\end{lemma}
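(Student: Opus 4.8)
The plan is to prove this by a straightforward first-moment (union-bound) argument, mirroring but considerably simplifying the second-moment argument used in Lemma \ref{lemma:gomes_xor:unsatbound}. Let $Y = \#(H \land \Q)$ denote the number of truth assignments satisfying both $H$ and $\Q$, and observe that $H \land \Q$ is unsatisfiable precisely when $Y = 0$. It therefore suffices to bound $\P(Y \geq 1)$ from above by $2^{-\alpha}$, and then take complements.

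First I would compute $\E{Y}$ by linearity of expectation. Writing $R$ for the set of satisfying assignments of $H$ and $Y_\sigma$ for the $0$-$1$ indicator that $\sigma \in R$ also satisfies $\Q$, we have $Y = \sum_{\sigma \in R} Y_\sigma$ and hence $\E{Y} = \sum_{\sigma \in R} \P(\sigma~\text{satisfies}~\Q)$. By the first part of Lemma \ref{lemma:gomes_xor:pairwise}, each term equals $2^{-\ceil{sn}}$, so $\E{Y} = \#H \cdot 2^{-\ceil{sn}}$. Conditioning on the event $\#H \leq 2^{\ceil{sn}-\alpha}$ then yields $\E{Y} \leq 2^{\ceil{sn}-\alpha} \cdot 2^{-\ceil{sn}} = 2^{-\alpha}$.

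Next I would apply Markov's inequality to the nonnegative integer-valued random variable $Y$. Since $Y \geq 1$ is exactly the event $Y \neq 0$, Markov's inequality gives $\P(Y \geq 1) \leq \E{Y} \leq 2^{-\alpha}$. Because $H \land \Q$ is satisfiable if and only if $Y \geq 1$, this establishes $\P(H \land \Q~\text{is satisfiable} ~|~ \#H \leq 2^{\ceil{sn}-\alpha}) \leq 2^{-\alpha}$, and the claimed bound on the probability of unsatisfiability follows immediately by taking complements.

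I expect no real obstacle here; this is the ``easy'' first-moment complement to the second-moment Lemma \ref{lemma:gomes_xor:unsatbound}. Notably, this direction requires neither the pairwise-independence clause (the second part of Lemma \ref{lemma:gomes_xor:pairwise}) nor any variance/Chebyshev estimate — the first moment alone suffices, which is why the argument is shorter. The only point worth verifying carefully is the conditioning: since $\Q$ is generated independently of $H$, conditioning on a value of $\#H$ does not disturb the probability $2^{-\ceil{sn}}$ that a fixed assignment satisfies $\Q$, so the expectation bound $\E{Y} \leq 2^{-\alpha}$ survives the conditioning and feeds directly into Markov's inequality.
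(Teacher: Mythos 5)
Your proposal is correct and follows essentially the same route as the paper: a first-moment argument that computes $\E{\#(H \land \Q)} = \#H \cdot 2^{-\ceil{sn}}$ via Lemma~\ref{lemma:gomes_xor:pairwise}, applies Markov's inequality to bound $\P(\#(H \land \Q) \geq 1)$, and uses the hypothesis $\#H \leq 2^{\ceil{sn}-\alpha}$ to conclude. The only cosmetic difference is that the paper recalls the expectation computation from Lemma~\ref{lemma:gomes_xor:unsatbound} rather than re-deriving it by linearity, as you do.
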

\begin{proof}
	Let the random variable $Y$ denote $\# (H \land \Q)$ as in Lemma \ref{lemma:gomes_xor:unsatbound}. Markov's inequality implies that $\P(Y \geq 1) \leq \E{Y}$. Recall from Lemma \ref{lemma:gomes_xor:unsatbound} that $\E{Y} =\# H \cdot 2^{-\ceil{sn}}$, so $\P(Y \geq 1) \leq \# H \cdot 2^{-\ceil{sn}}$. If $\#H \leq 2^{\ceil{sn} - \alpha}$, then $\# H \cdot 2^{-\ceil{sn}} \leq 2^{-\alpha}$. Thus $\P(Y \geq 1 ~|~ \#H \leq 2^{	\ceil{sn} - \alpha}) \leq 2^{-\alpha}$. Since $H \land \Q$ is unsatisfiable exactly when $Y = 0$, we conclude  $\P(H \land \Q~\text{is unsatisfiable}) \geq 1 - 2^{-\alpha}$.
\end{proof}

Using the key behavior of XOR-clauses described in Lemma \ref{lemma:gomes_xor:satbound}, we can transform upper bounds (w.h.p.) on the number of solutions to $\F$ into upper bounds on the location of a possible $k$-CNF-XOR phase-transition.
\begin{lemma}
\label{lemma:cnfxor_unsat}
Let $k \geq 2$, $s \geq 0$, and $r \geq 0$. Let $B_1$, $B_2$, $\cdots$ be an infinite convergent sequence of positive real numbers such that $\#\F \leq B_i^n$ occurs w.h.p.\ for all $i \geq 1$. If $s > \log_2(\lim_{i \to \infty} B_i)$, then w.h.p.\ $\FQ$ is unsatisfiable.
\end{lemma}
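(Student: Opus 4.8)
The plan is to mirror the proof of Lemma \ref{lemma:cnfxor_sat}, replacing its lower-bound machinery with the upper-bound dual. First I would decompose $\FQ = \F \land \Q$ into the independently generated $k$-CNF and XOR parts, and for a single well-chosen index $i$ introduce the event $U_n$ that $\#\F \leq B_i^n$; by hypothesis each $U_n$ holds w.h.p. Because the sequence $\{B_i\}$ is only assumed convergent (not monotone), I would extract the usable index from the \emph{strict} inequality $\lim_{i \to \infty} B_i < 2^s$: this forces $B_i < 2^s$ for all sufficiently large $i$, so I can fix one such $i$ together with a $\delta > 0$ satisfying $B_i < 2^{s - \delta}$.

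Next I would show that $\P(\FQ~\text{is unsatisfiable} \mid U_n) \to 1$ by applying Lemma \ref{lemma:gomes_xor:satbound} with $H = \F$. On the event $U_n$ we have $\#\F \leq B_i^n < 2^{(s-\delta)n} = 2^{sn - \delta n}$. Here the ceiling works in our favor: since $\ceil{sn} \geq sn$, we get $2^{\ceil{sn} - \delta n} \geq 2^{sn - \delta n} > \#\F$, so setting $\alpha = \delta n$ gives $\#\F \leq 2^{\ceil{sn} - \alpha}$ for every $n \geq 1/\delta$ (ensuring $\alpha \geq 1$, as the lemma requires). Lemma \ref{lemma:gomes_xor:satbound} then yields $\P(\FQ~\text{is unsatisfiable} \mid U_n) \geq 1 - 2^{-\delta n}$, which converges to $1$.

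Finally I would combine the two limits exactly as in Lemma \ref{lemma:cnfxor_sat}: writing $E_n$ for the event that $\FQ$ is unsatisfiable, both $\P(E_n \mid U_n)$ and $\P(U_n)$ converge to $1$, so the identity $\P(E_n \cap U_n) = \P(E_n \mid U_n)\,\P(U_n)$ gives $\P(E_n \cap U_n) \to 1$, and since $\P(E_n) \geq \P(E_n \cap U_n)$ we conclude $\P(E_n) \to 1$, i.e.\ $\FQ$ is unsatisfiable w.h.p.

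There is no deep obstacle here, as the argument is the formal dual of Lemma \ref{lemma:cnfxor_sat}; the two points needing care are (i) extracting a single usable index $i$ from mere convergence of $\{B_i\}$, which is precisely why the strict gap $\lim_{i \to \infty} B_i < 2^s$ is essential, and (ii) the bookkeeping around $\ceil{sn}$. The latter is actually the gentler direction: in Lemma \ref{lemma:cnfxor_sat} the rounding inflated $\ceil{sn}$ against the desired inequality and forced the auxiliary $1/N$ slack together with the restriction $n > 2N$, whereas for unsatisfiability the bound $\ceil{sn} \geq sn$ pushes the threshold $2^{\ceil{sn} - \alpha}$ upward, so a plain $\delta$ suffices and the divergence $\alpha \to \infty$ needed to drive $2^{-\alpha} \to 0$ comes for free from $\alpha = \delta n$.
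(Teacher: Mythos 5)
Your proposal is correct and follows essentially the same route as the paper's proof: decompose $\FQ = \F \land \Q$, fix an index $i$ with $B_i < 2^{s}$, define $U_n$ as the event $\#\F \leq B_i^n$, apply Lemma~\ref{lemma:gomes_xor:satbound} with $\alpha$ growing linearly in $n$, and combine $\P(\neg E_n \mid U_n) \to 1$ with $\P(U_n) \to 1$. Your bookkeeping is in fact slightly cleaner than the paper's, which takes $\alpha = \delta n + 1$ and absorbs the extra $-1$ via an auxiliary $1/N$ slack and the restriction $n > N$, whereas your observation that $\ceil{sn} \geq sn$ lets $\alpha = \delta n$ suffice once $n \geq 1/\delta$.
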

\begin{proof}
	For all integers $n \geq 0$, let the event $\neg E_n$ denote the event when $\FQ$ is unsatisfiable. We would like to show that $\P(\neg E_n)$ converges to $1$ as $n \to \infty$. 
	
	The general idea of the proof follows. Note that $\FQ = \F \land \Q$ as in Lemma \ref{lemma:cnfxor_sat}. Let the event $U_n$ denote the event when the number of solutions of $\F$ is bounded from above (by an upper bound to be specified later). We show that $U_n$ occurs w.h.p.. Next, we use Lemma \ref{lemma:gomes_xor:satbound} to bound from below the probability that $\F\land\Q$ becomes unsatisfiable given that $\F$ has few solutions; we use this to show that $\P(\neg E_n ~|~ U_n)$ converges to $1$ as $n \to \infty$. Finally, we combine these results to prove that $\P(\neg E_n)$ converges to 1.
			
	Since $2^s > \lim_{i \to \infty} B_i$, there is some integer $i \geq 1$ such that $2^s > B_i$. Define the event $U_n$ as the event when $\#\F \leq B_i^n$. Then $U_n$ occurs w.h.p.\ by hypothesis.
	
	Next, we show that $\P(\neg E_n~|~U_n)$ converges to 1. Choose $\delta > 0$ and $N > 0$ such that $2^{s - \delta - 1/N} > B_i$. As in Lemma \ref{lemma:cnfxor_sat} we are concerned only with the behavior of $\P(\neg E_n~|~U_n)$ in the limit so we can restrict our attention only to large enough $n$. In particular, consider $n > N$. Then we get that $2^{\ceil{sn} - \delta n - 1} > 2^{sn - \delta n - n/N} > B_i^n$. Let $\alpha = \delta n + 1$, so that $ 2^{\ceil{sn} - \alpha} \geq B_i^n$. Then Lemma \ref{lemma:gomes_xor:satbound} says that $\P(\neg E_n ~|~ U_n) \geq 1 - 2^{-\delta n - 1}$. Since $1 - 2^{-\delta n - 1}$ converges to $1$ as $n \to \infty$, $\P(\neg E_n ~|~ U_n)$ must also converge to $1$. 
	
	Thus both $\P(\neg E_n~|~U_n)$ and $\P(U_n)$ converge to $1$ as $n \to \infty$. Since $\P(\neg E_n \cap U_n) = \P(\neg E_n ~|~ U_n) \cdot \P(U_n)$, this implies that $\P(\neg E_n \cap U_n)$ also converges to $1$. Since $\P(\neg E_n \cap U_n) \leq \P(\neg E_n) \leq 1$, this implies that $\P(\neg E_n)$ converges to $1$.
\end{proof}

Finally, it remains only to use Lemma \ref{lemma:cnfxor_unsat} to obtain bounds on the $k$-CNF-XOR phase-transition. The tight upper bound on $\# \F$ from Lemma \ref{lemma:abbe}.\ref{lemma:abbe:upper} corresponds to a tight upper bound on the location of the phase-transition.
\begin{lemma}
	\label{lemma:mainresult:tight:unsat}
	
	Let $k \geq 2$, and let $\alpha_k$, $\mathcal{C}_k$, and $\phi_k(r)$ be as defined in Lemma \ref{lemma:abbe}. 
	Then for all $r \in [0, \alpha_k) \backslash \mathcal{C}_k$ and $s > \phi_k(r)$, $\FQ$ is unsatisfiable w.h.p..
\end{lemma}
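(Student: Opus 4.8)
The plan is to mirror the proof of Lemma \ref{lemma:mainresult:tight:sat} exactly, replacing the w.h.p.\ lower bound on $\#\F$ by the matching upper bound and invoking Lemma \ref{lemma:cnfxor_unsat} in place of Lemma \ref{lemma:cnfxor_sat}. Lemma \ref{lemma:cnfxor_unsat} reduces the claim to exhibiting a convergent sequence $B_1, B_2, \ldots$ of positive reals with $\lim_{i \to \infty} B_i = 2^{\phi_k(r)}$ such that $\#\F \leq B_i^n$ holds w.h.p.\ for each fixed $i \geq 1$; the strict hypothesis $s > \phi_k(r)$ then furnishes precisely the strict inequality $s > \log_2(\lim_{i \to \infty} B_i)$ demanded by that lemma, and its conclusion is exactly that $\FQ$ is unsatisfiable w.h.p.

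Concretely, I would set $B_i = 2^{\phi_k(r) + 1/i}$. Since $r \in [0, \alpha_k) \backslash \mathcal{C}_k$, Lemma \ref{lemma:abbe} applies; in particular Lemma \ref{lemma:abbe}.\ref{lemma:abbe:upper} with $\epsilon = 1/i$ gives that w.h.p.\ $\#\F \leq (2^{\phi_k(r) + 1/i})^n = B_i^n$ for each fixed $i \geq 1$. The sequence $B_i$ is decreasing and converges to $2^{\phi_k(r)}$, so $\log_2(\lim_{i \to \infty} B_i) = \phi_k(r) < s$. Applying Lemma \ref{lemma:cnfxor_unsat} then yields that $\FQ$ is unsatisfiable w.h.p., which is the desired statement.

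The genuine substance of the argument is carried entirely by the lemmas already in place: Lemma \ref{lemma:abbe}.\ref{lemma:abbe:upper} supplies the sharp w.h.p.\ ceiling $(2^{\phi_k(r) + \epsilon})^n$ on the number of $k$-CNF solutions, and Lemma \ref{lemma:cnfxor_unsat} (resting on the first-moment Markov bound of Lemma \ref{lemma:gomes_xor:satbound}) converts any such ceiling into an unsatisfiability threshold for $\FQ$. Thus I do not expect a real obstacle here; the only point demanding care is that the approximation parameter must be driven to $0$ through the sequence $\epsilon = 1/i$, so that $\lim_{i \to \infty} B_i$ equals exactly $2^{\phi_k(r)}$ rather than some $2^{\phi_k(r) + \epsilon}$ with $\epsilon > 0$ held fixed. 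Were $\epsilon$ fixed, the gap between $\phi_k(r)$ and $s$ guaranteed by the hypothesis could be swallowed and the strict inequality required by Lemma \ref{lemma:cnfxor_unsat} would fail. The restriction to $r \in [0, \alpha_k) \backslash \mathcal{C}_k$ is likewise essential, since it is exactly on this set that $\phi_k(r)$ is defined and that the bounds of Lemma \ref{lemma:abbe} are available.
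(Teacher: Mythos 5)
Your proposal is correct and matches the paper's own proof essentially verbatim: both take $B_i = 2^{\phi_k(r)+1/i}$, invoke Lemma \ref{lemma:abbe}.\ref{lemma:abbe:upper} to get $\#\F \leq B_i^n$ w.h.p.\ for each fixed $i$, and then apply Lemma \ref{lemma:cnfxor_unsat} using $s > \phi_k(r) = \log_2(\lim_{i\to\infty} B_i)$. Your added remarks on why $\epsilon$ must tend to $0$ along the sequence, and why $r \in [0,\alpha_k)\backslash\mathcal{C}_k$ is needed, are accurate but not beyond what the paper's argument already implicitly relies on.
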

\begin{proof}
	Let $B_i = 2^{\phi_k(r)+1/i}$. By Lemma \ref{lemma:abbe}.\ref{lemma:abbe:upper}, $B_i^n \geq \#\F$ w.h.p.\ for all $i \geq 1$. Furthermore, $\lim_{i \to \infty} B_i = 2^{\phi_k(r)}$ and so $s > \log_2(\lim_{i \to \infty} B_i)$. Thus $\FQ$ is unsatisfiable w.h.p.\ by Lemma \ref{lemma:cnfxor_unsat}.
\end{proof}

The weaker upper bound on $\# \F$ from Lemma \ref{lemma:numsol_bound} corresponds to a weaker upper bound on the phase-transition.
\begin{lemma}
	\label{lemma:mainresult:loose:unsat}
	Let $k \geq 2$, $s \geq 0$, and $r \geq 0$. If $s > 1 + r \log_2(1 - 2^{-k})$, then $\FQ$ is unsatisfiable w.h.p..
\end{lemma}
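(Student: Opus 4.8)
The plan is to invoke Lemma \ref{lemma:cnfxor_unsat}, which is precisely the machine that converts a family of w.h.p.\ upper bounds on $\#\F$ into an upper bound on the location of the phase-transition. All I need to supply is a convergent sequence $B_1, B_2, \ldots$ of positive reals such that $\#\F \leq B_i^n$ holds w.h.p.\ for every $i$, and such that $\log_2(\lim_{i\to\infty} B_i)$ coincides with the claimed threshold $1 + r\log_2(1 - 2^{-k})$.

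First I would build this sequence directly out of Lemma \ref{lemma:numsol_bound}. For each integer $i \geq 1$ set $\epsilon_i = 1 + 1/i$, so that $\epsilon_i > 1$, and define $B_i = 2\epsilon_i (1 - 2^{-k})^r$. Lemma \ref{lemma:numsol_bound}, applied with $\epsilon = \epsilon_i$, gives that w.h.p.\ $\#\F < (2\epsilon_i (1-2^{-k})^r)^n = B_i^n$, which in particular yields the non-strict bound $\#\F \leq B_i^n$ w.h.p.\ that the conversion lemma requires. Next I would compute the limit: since $\epsilon_i \to 1$, the sequence converges, with $\lim_{i\to\infty} B_i = 2(1-2^{-k})^r$, and hence $\log_2(\lim_{i\to\infty} B_i) = \log_2\bigl(2(1-2^{-k})^r\bigr) = 1 + r\log_2(1 - 2^{-k})$. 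The hypothesis $s > 1 + r\log_2(1-2^{-k})$ is therefore exactly the condition $s > \log_2(\lim_{i\to\infty} B_i)$ needed by Lemma \ref{lemma:cnfxor_unsat}, and applying that lemma concludes that $\FQ$ is unsatisfiable w.h.p.

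I do not expect any genuine obstacle, since the analytic content has been front-loaded into Lemmas \ref{lemma:numsol_bound} (a first-moment/Markov bound on $\#\F$) and \ref{lemma:cnfxor_unsat} (the second-moment argument for the XOR-clauses). The only points that demand a little care are bookkeeping: the $\epsilon_i$ must be chosen strictly greater than $1$ so that Lemma \ref{lemma:numsol_bound} is applicable, yet must tend to $1$ so that the limiting threshold is the sharp value $1 + r\log_2(1 - 2^{-k})$ rather than an inflated one, and one should note that the strict bound produced by Lemma \ref{lemma:numsol_bound} comfortably implies the non-strict bound that Lemma \ref{lemma:cnfxor_unsat} consumes.
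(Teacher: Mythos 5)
Your proposal is correct and matches the paper's own proof essentially verbatim: the paper uses the same sequence $B_i = (1+1/i)\cdot 2(1-2^{-k})^r$ obtained from Lemma \ref{lemma:numsol_bound}, computes the same limit $2(1-2^{-k})^r$, and concludes via Lemma \ref{lemma:cnfxor_unsat}. No differences worth noting.
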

\begin{proof}
	Let $B_i = ((1 + 1/i) \cdot 2(1 - 2^{-k})^r)$. By Lemma \ref{lemma:numsol_bound}, $B_i^n \geq \#\F$ w.h.p.\ for all $i \geq 1$. Furthermore, $\lim_{i \to \infty} B_i = 2(1 - 2^{-k})^r$ and so $s > \log_2(\lim_{i \to \infty} B_i)$. Thus $\FQ$ is unsatisfiable w.h.p.\ by Lemma \ref{lemma:cnfxor_unsat}.
\end{proof}

\section{Extending the Phase-Transition Region}
\label{sec:conjecture}

Section \ref{sec:theoretical} proved that a phase-transition exists for $k$-CNF-XOR formulas when the $k$-clause density is small. Our empirical observations in Section \ref{sec:experiments} suggest that a phase-transition exists for higher $k$-clause densities as well. 
Moreover, we observed a phase-transition that follows a linear trade-off between $k$-clauses and XOR-clauses.
In this section, we conjecture two possible extensions to our theoretical results.

The first extension follows from Theorem \ref{thm:mainresult:tight}, which implies that $s = \phi_k(r)$ gives the location of the phase-transition for small $k$-clause densities. It is thus natural to conjecture that $\phi_k(r)$ gives the location of the $k$-CNF-XOR phase-transition for all (except perhaps countably many) $r > 0$. This would follow from a conjecture of \cite{AM14}.

The second extension follows from the experimental results in Section \ref{sec:experiments}, which suggest that the location of the phase-transition follows a linear trade-off between $k$-clauses and XOR-clauses. This leads to the following conjecture:
\begin{conjecture}[$k$-CNF-XOR Linear Phase-Transition Conjecture]
	Let $k \geq 2$. Then there exists a slope $L_k < 0$ and a constant $\alpha_k^* > 0$ such that for all $r \in [0, \alpha_k^*)$ and $s \geq 0$:
	\begin{enumerate}
		\item If $s < r L_k + 1$, then w.h.p.\ $\FQ$ is satisfiable.
		\item If $s > r L_k + 1$, then w.h.p.\ $\FQ$ is unsatisfiable.
	\end{enumerate}
\end{conjecture}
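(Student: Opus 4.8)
The plan is to prove the conjecture with the explicit slope $L_k = \log_2(1 - 2^{-k})$, matching the intercept $\phi_k(0) = 1$ at $r = 0$. With this choice the unsatisfiable half (part~2) requires no new work: it is exactly Lemma~\ref{lemma:mainresult:loose:unsat} (equivalently Theorem~\ref{thm:mainresult:loose}.\ref{thm:mainresult:loose:unsat}), which already shows that $s > 1 + r \log_2(1 - 2^{-k})$ forces $\FQ$ to be unsatisfiable w.h.p.\ for \emph{all} $r \geq 0$. Thus the entire content of the conjecture lies in the satisfiable half. Using the decomposition $\FQ = \F \land \Q$ and Lemma~\ref{lemma:cnfxor_sat}, part~1 reduces to a single analytic statement: for $r < \alpha_k^*$ one must exhibit a convergent sequence $B_i \to 2(1 - 2^{-k})^r$ with $B_i^n \leq \#\F$ w.h.p. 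In other words, I would strengthen the crude lower bound of Lemma~\ref{lemma:achlioptas_numsol} to one that matches, up to subexponential factors, the first moment $\E{\#\F} = 2^n (1 - 2^{-k})^{\ceil{rn}}$ computed in Lemma~\ref{lemma:numsol_bound}.

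Since Markov's inequality (Lemma~\ref{lemma:numsol_bound}) already supplies the matching \emph{upper} bound $\frac{1}{n}\log_2 \#\F \leq 1 + r\log_2(1-2^{-k}) + o(1)$ w.h.p., the remaining task is to show that $\#\F$ concentrates at its mean, i.e.\ that $\frac{1}{n}\log_2 \#\F \to 1 + r\log_2(1 - 2^{-k})$ in probability, equivalently that the free-entropy density $\phi_k(r)$ of Lemma~\ref{lemma:abbe} coincides with its annealed value on $[0, \alpha_k^*)$. The natural tool is the second moment method. I would expand $\E{(\#\F)^2}$ as a sum over pairs of assignments grouped by their overlap $\alpha \in [0,1]$, obtaining $\E{(\#\F)^2} = \E{\#\F}^2 \sum_\alpha 2^{n g_r(\alpha) + o(n)}$ for an explicit rate function $g_r$. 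The threshold $\alpha_k^*$ would be the supremum of densities for which $g_r$ attains its maximum at the uncorrelated point $\alpha = 1/2$ with $g_r(1/2) = 0$, so that $\E{(\#\F)^2} = O(\E{\#\F}^2)$; Paley--Zygmund then yields $\#\F \geq \frac{1}{2}\E{\#\F}$ with probability bounded away from $0$. To push $\alpha_k^*$ as far as possible I would replace the uniform count by the Achlioptas--Peres weighted count, whose rate function keeps its maximum at $\alpha = 1/2$ up to density $\Theta(2^k)$.

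The constant-probability lower bound must then be upgraded to a w.h.p.\ statement, and this is where I expect to spend the most effort. One route is the small-subgraph conditioning method (Robinson--Wormald, Janson): controlling the joint law of short clause-cycle counts would show $\#\F / \E{\#\F}$ converges to a strictly positive limit random variable, whence $\frac{1}{n}\log_2 \#\F \to \frac{1}{n}\log_2 \E{\#\F}$ in probability. An alternative is to combine the constant-probability bound with a Lipschitz (Azuma) concentration estimate for $\frac{1}{n}\log_2(1 + \#\F)$ around its expectation and with the convergence of $\frac{1}{n}\E{\log_2(\#\F)~|~\F~\text{is sat.}}$ furnished by Lemma~\ref{lemma:abbe}.\ref{lemma:abbe:converge}, pinning the limit at the annealed value. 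Either way, taking $B_i = 2(1 - 2^{-k})^r - 1/i$ in Lemma~\ref{lemma:cnfxor_sat} completes part~1.

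The main obstacle is that the plain second moment does \emph{not} concentrate for $k$-SAT: beyond a very low density the dominant contribution to $\E{(\#\F)^2}$ comes from a nontrivial overlap, so $\E{(\#\F)^2}$ is exponentially larger than $\E{\#\F}^2$ and Paley--Zygmund becomes useless. Overcoming this requires the weighting scheme above together with a delicate verification that its rate function is maximized at $\alpha = 1/2$, and then the small-subgraph-conditioning computation, which is notoriously technical and is precisely the step that fixes the achievable $\alpha_k^*$. Matching the experimentally observed ``kink'' --- that is, taking $\alpha_k^*$ all the way to the condensation threshold, where $\phi_k$ is expected to cease being linear --- appears to lie beyond current second-moment technology for small $k$, which is why the statement remains a conjecture rather than a theorem.
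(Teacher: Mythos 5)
You should first be clear about what you are being measured against: the paper does \emph{not} prove this statement. It is posed as an open conjecture, and the paper explicitly leaves its proof to future work; the only proved results are Theorems~\ref{thm:mainresult:tight} and~\ref{thm:mainresult:loose}. Judged against those, your part~2 is correct but contains nothing new: with $L_k=\log_2(1-2^{-k})$ it is verbatim Lemma~\ref{lemma:mainresult:loose:unsat}. The genuine gap is in part~1, and it is not a missing technical step --- the statement your program reduces to is false. Fixing $L_k=\log_2(1-2^{-k})$ commits you to showing that $\frac{1}{n}\log_2\#\F$ converges to the annealed exponent $1+r\log_2(1-2^{-k})$, i.e.\ that $\phi_k(r)$ of Lemma~\ref{lemma:abbe} equals its first-moment value on an interval $[0,\alpha_k^*)$. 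For random $k$-SAT this fails for \emph{every} $r>0$ (unlike the balanced models XOR-SAT and NAE-SAT, where such concentration does hold at low density). Concretely, consider pairs of clauses sharing exactly one variable $x$: there are $(1+o(1))\frac{k^2r^2}{2}n$ such pairs, and for small $r$ they lie, up to $O(r^3n)$ exceptions, in pairwise disjoint tree components of the factor graph, so their contributions to $\log_2\#\F$ add. Writing $A=2^{2k-1}-2^k$, such a pair has $A+1$ satisfying assignments when $x$ carries the same sign in both clauses and $A$ when the signs differ (each with probability $\tfrac12$), whereas the annealed factorization credits it with $2^{2k-1}(1-2^{-k})^2=A+\tfrac12$. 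By strict concavity of the logarithm, $\tfrac12\log_2(A+1)+\tfrac12\log_2 A<\log_2(A+\tfrac12)$, a deficit of $\Theta(2^{-4k})$ bits per pair, whence $\phi_k(r)\leq 1+r\log_2(1-2^{-k})-c_kr^2+O(r^3)$ for some $c_k>0$. Combined with Theorem~\ref{thm:mainresult:tight} (which the paper \emph{does} prove), this refutes your part~1: for small $r\notin\mathcal{C}_k$ and any $s$ with $\phi_k(r)<s<1+r\log_2(1-2^{-k})$, the formula $\FQ$ is unsatisfiable w.h.p.

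The tools you invoke cannot repair this, and the warning signs are already in the paper. The plain second moment fails at \emph{every} positive density, not just ``beyond a very low density'': the overlap rate function has strictly positive derivative at overlap $\tfrac12$, so its maximum is never at the uncorrelated point. Small-subgraph conditioning is inapplicable for the same reason, since it requires $\E{(\#\F)^2}=O(\E{\#\F}^2)$, which never holds here. And the Achlioptas--Peres weighting controls a \emph{different} random variable whose exponential growth rate is strictly below the annealed rate of the unweighted count; proving concentration for it can only reproduce bounds of the shape of Lemma~\ref{lemma:achlioptas_numsol} and Lemma~\ref{lemma:mainresult:loose:sat}, i.e.\ $\#\F\geq(\Lfull-\epsilon)^{n/2}$ with $\frac12\log_2\Lfull$ strictly below the annealed line --- it cannot lift $\#\F$ back up to that line. (The same asymmetry is why the satisfiability threshold of \cite{DSS15} sits strictly below the first-moment threshold $2^k\ln 2-\frac{\ln 2}{2}$.) So any viable proof must take $L_k$ strictly below $\log_2(1-2^{-k})$; worse, the quadratic deficit above shows $\phi_k$ is not linear in any neighborhood of $0$, which via Theorem~\ref{thm:mainresult:tight} puts the conjecture's exact linearity itself in tension with the paper's own theorem (the bend, of size $\Theta(k^22^{-4k}r^2)$, is far too small to be visible in the experiments of Section~3). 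Your closing concession that the statement ``remains a conjecture'' is honest, but the obstacle is not that second-moment technology is hard to push; it is that the target you chose to push it toward is provably unreachable.
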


Theorem \ref{thm:mainresult:loose} bounds the possible values for $L_k$. Moreover, if the Linear $k$-CNF-XOR Phase-Transition Conjecture holds, then Theorem \ref{thm:mainresult:tight} implies that $\phi_k(r)$ is linear for all $r < \alpha_k$ and $r < \alpha_k^*$. Explicit computations of $\phi_k(r)$ (or sufficiently tight bounds) would resolve this conjecture.

Note that this conjecture does not necessarily describe the entire $k$-CNF-XOR phase-transition; a phase-transition may exist when $r > \alpha_k^*$ as well. The experimental results in Section \ref{sec:experiments} for $k = 2$ and $k = 3$ suggest that the location of the phase-transition may ``kink'' and become non-linear for large enough $k$-clause densities. We leave the full characterization of the $k$-CNF-XOR phase-transition for future work, noting that a full characterization would resolve the Satisfiability Phase-Transition Conjecture.

\section{Conclusion} \label{sec:conclusion}
We presented the first study of phase-transition phenomenon in the satisfiability of $k$-CNF-XOR random formulas. We showed that the free-entropy density $\phi_k(r)$ of $k$-CNF formulas gives the location of the phase-transition for $k$-CNF-XOR formulas when the density of the $k$-CNF clauses is small. We conjectured in the $k$-CNF-XOR Linear Phase-Transition Conjecture that this phase-transition is linear. We leave further analysis and proof of this conjecture for future work.

Pittel and Sorkin \cite{PS15} recently identified the location of the phase-transition for random $\ell$-XOR formulas, where each clause contains exactly $\ell$ literals. This suggests that a phase-transition may also exist in formulas that mix $k$-CNF clauses together with $\ell$-XOR clauses.

In this work we did not explore the runtime of {\SAT} solvers over the space of $k$-CNF-XOR formulas. Historically, other phase-transition phenomena have been closely connected empirically to solver runtime. Developing this connection in the case of $k$-CNF-XOR formulas is an exciting direction for future research and may lead to practical improvements to hashing-based sampling and counting algorithms.

\subsection*{Acknowledgments}
 The authors would like to thank Dimitris Achlioptas for helpful discussions in the early stages of this project.
 
Work supported in part by NSF grants CCF-1319459 and IIS-1527668, by NSF Expeditions in Computing project ``ExCAPE: Expeditions in Computer Augmented Program Engineering", by BSF grant 9800096,  by the Ken Kennedy Institute Computer Science \& Engineering
 Enhancement Fellowship funded by the Rice Oil \& Gas HPC Conference, and Data Analysis and 
 Visualization Cyberinfrastructure funded by NSF under grant 
 OCI-0959097. Part of this work was done during a visit to the Israeli Institute for Advanced Studies.


\begin{thebibliography}{}

\bibitem[\protect\citeauthoryear{Abbe and Montanari}{2014}]{AM14}
E.~Abbe and A.~Montanari.
\newblock On the concentration of the number of solutions of random
  satisfiability formulas.
\newblock {\em Random Structures \& Algorithms}, 45(3):362--382, 2014.

\bibitem[\protect\citeauthoryear{Achlioptas and Coja{-}Oghlan}{2008}]{AchCoj08}
D.~Achlioptas and A.~Coja{-}Oghlan.
\newblock Algorithmic barriers from phase transitions.
\newblock In {\em Proc. of FoCS}, pages 793--802, 2008.

\bibitem[\protect\citeauthoryear{Achlioptas \bgroup \em et al.\egroup
  }{2011}]{ACR11}
D.~Achlioptas, A.~Coja-Oghlan, and F.~Ricci-Tersenghi.
\newblock On the solution-space geometry of random constraint satisfaction
  problems.
\newblock {\em Random Structures \& Algorithms}, 38(3):251--268, 2011.

\bibitem[\protect\citeauthoryear{Achlioptas}{2009}]{Achlioptas09}
D.~Achlioptas.
\newblock Random satisfiability.
\newblock In {\em Handbook of Satisfiability}, pages 245--270. IOS Press, 2009.

\bibitem[\protect\citeauthoryear{Apt}{2003}]{Apt03}
K.~Apt.
\newblock {\em Principles of Constraint Programming}.
\newblock Cambridge Univ. Press, 2003.

\bibitem[\protect\citeauthoryear{Belov \bgroup \em et al.\egroup
  }{2014}]{BDHJ14}
A.~Belov, D.~Diepold, M.~J. Heule, and M.~J{\"a}rvisalo.
\newblock Sat competition 2014.
\newblock 2014.

\bibitem[\protect\citeauthoryear{Chakraborty \bgroup \em et al.\egroup
  }{2013a}]{CMV13a}
S.~Chakraborty, K.~S. Meel, and M.~Y. Vardi.
\newblock A scalable and nearly uniform generator of {SAT} witnesses.
\newblock In {\em Proc. of CAV}, pages 608--623, 2013.

\bibitem[\protect\citeauthoryear{Chakraborty \bgroup \em et al.\egroup
  }{2013b}]{CMV13b}
S.~Chakraborty, K.~S. Meel, and M.~Y. Vardi.
\newblock A scalable approximate model counter.
\newblock In {\em Proc. of CP}, pages 200--216, 2013.

\bibitem[\protect\citeauthoryear{Chakraborty \bgroup \em et al.\egroup
  }{2014a}]{CFMSV14}
S.~Chakraborty, D.~J. Fremont, K.~S. Meel, S.~A. Seshia, and M.~Y. Vardi.
\newblock Distribution-aware sampling and weighted model counting for {SAT}.
\newblock In {\em Proc. of AAAI}, pages 1722--1730, 2014.

\bibitem[\protect\citeauthoryear{Chakraborty \bgroup \em et al.\egroup
  }{2014b}]{CMV14}
S.~Chakraborty, K.~S. Meel, and M.~Y. Vardi.
\newblock Balancing scalability and uniformity in {SAT} witness generator.
\newblock In {\em Proc. of DAC}, pages 1--6, 2014.

\bibitem[\protect\citeauthoryear{Chakraborty \bgroup \em et al.\egroup
  }{2016}]{CMV16}
S.~Chakraborty, K.~S. Meel, and M.~Y. Vardi.
\newblock Algorithmic improvements in approximate counting for probabilistic
  inference: From linear to logarithmic {SAT} calls.
\newblock In {\em Proc. of IJCAI}, 2016.

\bibitem[\protect\citeauthoryear{Cheeseman \bgroup \em et al.\egroup
  }{1991}]{CKT91}
P.~Cheeseman, B.~Kanefsky, and W.~M. Taylor.
\newblock Where the really hard problems are.
\newblock In {\em Proc. of IJCAI}, pages 331--340, 1991.

\bibitem[\protect\citeauthoryear{Chv\'atal and Reed}{1992}]{CR92}
V.~Chv\'atal and B.~Reed.
\newblock Mick gets some (the odds are on his side).
\newblock In {\em Proc. of FoCS}, pages 620--627, 1992.

\bibitem[\protect\citeauthoryear{Coarfa \bgroup \em et al.\egroup
  }{2003}]{CDSSV03}
C.~Coarfa, D.~D. Demopoulos, A.~S.~M. Aguirre, D.~Subramanian, and M.~Y. Vardi.
\newblock Random 3-sat: The plot thickens.
\newblock {\em Constraints}, 8(3):243--261, 2003.

\bibitem[\protect\citeauthoryear{Coja{-}Oghlan and Panagiotou}{2013}]{CP13a}
A.~Coja{-}Oghlan and K.~Panagiotou.
\newblock Going after the k-sat threshold.
\newblock In {\em Proc. of SToC}, pages 705--714, 2013.

\bibitem[\protect\citeauthoryear{Coja-Oghlan and Reichman}{2013}]{CR13}
A.~Coja-Oghlan and D.~Reichman.
\newblock Sharp thresholds and the partition function.
\newblock In {\em Journal of Physics: Conference Series}, volume 473, page
  012015, 2013.

\bibitem[\protect\citeauthoryear{Crawford and Auton}{1993}]{CA93}
J.~M. Crawford and L.~D. Auton.
\newblock Experimental results on the crossover point in satisfiability
  problems.
\newblock In {\em Proc. of AAAI}, pages 21--27, 1993.

\bibitem[\protect\citeauthoryear{Creignou and Daud{\'e}}{1999}]{CD99}
N.~Creignou and H.~Daud{\'e}.
\newblock Satisfiability threshold for random xor-cnf formulas.
\newblock {\em Discrete Applied Mathematics}, 96–97:41 -- 53, 1999.

\bibitem[\protect\citeauthoryear{Creignou and Daud{\'e}}{2003}]{CD03}
N.~Creignou and H.~Daud{\'e}.
\newblock Smooth and sharp thresholds for random $k$-xor-cnf satisfiability.
\newblock {\em RAIRO:ITA}, 37(2):127--147, 2003.

\bibitem[\protect\citeauthoryear{Dechter and Meiri}{1994}]{DM94}
R.~Dechter and I.~Meiri.
\newblock Experimental evaluation of preprocessing algorithms for constraint
  satisfaction problems.
\newblock {\em Artificial Intelligence}, 68:211--241, 1994.

\bibitem[\protect\citeauthoryear{Dechter}{2003}]{Dec03}
R.~Dechter.
\newblock {\em Constraint Processing}.
\newblock Morgan Kaufmman, 2003.

\bibitem[\protect\citeauthoryear{Ding \bgroup \em et al.\egroup }{2015}]{DSS15}
J.~Ding, A.~Sly, and N.~Sun.
\newblock Proof of the satisfiability conjecture for large k.
\newblock In {\em Proc. of SToC}, pages 59--68. ACM, 2015.

\bibitem[\protect\citeauthoryear{Dubois and Mandler}{2002}]{DM02}
O.~Dubois and J.~Mandler.
\newblock The 3-xorsat threshold.
\newblock {\em Comptes Rendus Mathematique}, 335(11):963 -- 966, 2002.

\bibitem[\protect\citeauthoryear{Goerdt}{1996}]{Goerdt96}
A.~Goerdt.
\newblock A threshold for unsatisfiability.
\newblock {\em Journal of Computer and System Sciences}, 53(3):469 -- 486,
  1996.

\bibitem[\protect\citeauthoryear{Gogioso}{2014}]{Gogioso14}
S.~Gogioso.
\newblock Aspects of statistical physics in computational complexity.
\newblock {\em arXiv:1405.3558}, 2014.

\bibitem[\protect\citeauthoryear{Gomes \bgroup \em et al.\egroup
  }{2006}]{GSS06}
C.~P. Gomes, A.~Sabharwal, and B.~Selman.
\newblock Model counting: A new strategy for obtaining good bounds.
\newblock In {\em Proc. of AAAI}, volume~21, pages 54--61, 2006.

\bibitem[\protect\citeauthoryear{Gomes \bgroup \em et al.\egroup
  }{2007}]{GSS07}
C.~P. Gomes, A.~Sabharwal, and B.~Selman.
\newblock Near-uniform sampling of combinatorial spaces using {XOR}
  constraints.
\newblock In {\em Proc. of NIPS}, pages 670--676, 2007.

\bibitem[\protect\citeauthoryear{Ivrii \bgroup \em et al.\egroup
  }{2015}]{IMMV15}
A.~Ivrii, S.~Malik, K.~Meel, and M.~Vardi.
\newblock On computing minimal independent support and its applications to
  sampling and counting.
\newblock {\em Constraints}, pages 1--18, 2015.

\bibitem[\protect\citeauthoryear{Kirkpatrick and Selman}{1994}]{KS94}
S.~Kirkpatrick and B.~Selman.
\newblock Critical behavior in the satisfiability of random boolean
  expressions.
\newblock {\em Science}, 264(5163):1297--1301, 1994.

\bibitem[\protect\citeauthoryear{Meel \bgroup \em et al.\egroup
  }{2016}]{MVCFSFIM16}
K.~S. Meel, M.~Vardi, S.~Chakraborty, D.~J. Fremont, S.~A. Seshia, D.~Fried,
  A.~Ivrii, and S.~Malik.
\newblock Constrained sampling and counting: Universal hashing meets sat
  solving.
\newblock In {\em Proc. of Beyond {NP} Workshop}, 2016.

\bibitem[\protect\citeauthoryear{Mitchell \bgroup \em et al.\egroup
  }{1992}]{MSL92}
D.~Mitchell, B.~Selman, and H.~Levesque.
\newblock Hard and easy distributions of sat problems.
\newblock In {\em Proc. of AAAI}, pages 459--465, 1992.

\bibitem[\protect\citeauthoryear{Mu and Hoos}{2015}]{MH15}
Z.~Mu and H.~H. Hoos.
\newblock On the empirical time complexity of random 3-sat at the phase
  transition.
\newblock In {\em Proc. of IJCAI}, pages 367--373, 2015.

\bibitem[\protect\citeauthoryear{Pittel and Sorkin}{2015}]{PS15}
B.~Pittel and G.~B. Sorkin.
\newblock The satisfiability threshold for k-xorsat.
\newblock {\em Combinatorics, Probability and Computing}, FirstView:1--33, 10
  2015.

\bibitem[\protect\citeauthoryear{Soos \bgroup \em et al.\egroup }{2009}]{SNC09}
M.~Soos, K.~Nohl, and C.~Castelluccia.
\newblock {Extending SAT Solvers to Cryptographic Problems}.
\newblock In {\em Proc. of SAT}. Springer-Verlag, 2009.

\bibitem[\protect\citeauthoryear{Wilson}{2000}]{Wilson2000}
D.~B. Wilson.
\newblock Random 2-sat data.
\newblock http://dbwilson.com/2sat-data/, 2000.

\end{thebibliography}
\end{document}